\DeclarePairedDelimiter\norm{\lVert}{\rVert}
\pgfplotsset{compat=newest}
\tikzset
{
    treenode/.style = {rectangle, draw=black, align=center, minimum size=1cm},
    terminal/.style  = {circle, draw=black, align=center, minimum height=0.8cm, minimum width=1cm, anchor=north}
}
\newcommand{\R}{\mathbb{R}}
\newcommand{\E}{\mathbb{E}}
\newcommand{\V}{\mathbb{V}}
\newcommand{\x}{\mathbf{x}}
\newcommand{\X}{\mathbf{X}}
\newcommand{\cA}{\mathcal{A}}
\newcommand{\cC}{\mathcal{C}}
\newcommand{\cZ}{\mathcal{Z}}
\newcommand{\cH}{\mathcal{H}}
\newcommand{\sX}{\mathscr{X}}
\newcommand{\mfY}{\mathfrak{Y}}
\newcommand{\bbR}{\mathbb{R}}
\newcommand{\perm}{\tau}
\newcommand{\T}{\mathcal{T}}
\newcommand{\balpha}{\bar{\alpha}}
\newcommand{\bmalpha}{\bm{\alpha}}
\newtheorem{theorem}{Theorem}
\newtheorem{lemma}{Lemma}
\newtheorem{corollary}{Corollary}
\newtheorem{definition}{Definition}
\date{}
\begin{document}

\def\spacingset#1{\renewcommand{\baselinestretch}
{#1}\small\normalsize} \spacingset{1}


\title{\bf Estimating Shapley Effects in Big-Data Emulation and Regression Settings using Bayesian Additive Regression Trees}
  \author{Akira Horiguchi\thanks{
      AH would like to acknowledge Miheer Dewaskar for fruitful discussions.  The work of MTP was supported in part by the National Science Foundation under Agreements DMS-1916231, OAC-2004601, and in part by the King Abdullah University of Science and Technology (KAUST) Office of Sponsored Research (OSR) under Award No. OSR-2018-CRG7-3800.3.
    }\hspace{.2cm}\\
    Department of Statistics, University of California, Davis\\
    and \\
    Matthew T. Pratola \\
    Department of Statistics, Indiania University Bloomington}
  \maketitle

\bigskip
\begin{abstract}
Shapley effects are a particularly interpretable approach to assessing how a function depends on its various inputs. 
The existing literature contains various estimators for this class of sensitivity indices in the context of nonparametric regression where the function is observed with noise, but there does not seem to be an estimator that is computationally tractable for input dimensions in the hundreds scale.
This article provides such an estimator that is computationally tractable on this scale.
The estimator uses a metamodel-based approach by first fitting a Bayesian Additive Regression Trees model which is then used to compute Shapley-effect estimates.
This article also establishes a theoretical guarantee of posterior consistency on a large function class for this Shapley-effect estimator.
Finally, this paper explores the performance of these Shapley-effect estimators on four different test functions for various input dimensions, including $p=500$.
\end{abstract}

\noindent
{\it Keywords:}  Nonparametric, functional ANOVA, global sensitivity analysis, variable importance, surrogate model
\vfill

\newpage

\section{Introduction}
\label{sec:introduction}

An important task in global sensitivity analysis is to measure how a real-valued function depends on its various inputs. 
A popular measure of variable importance is the class of Sobol\'{} indices \citep{sobol1990sensitivity}, which decomposes the variance of outputs from a function into terms due to main effects for each input and interaction effects between the various inputs. 
To quantify the impact of any particular input dimension, either the \textit{main-effect Sobol\'{} index} or the \textit{total-effect Sobol\'{} index} can be used; 
the latter includes all interactions between the given input and any other input whereas the former excludes any such interaction.
Straightforward interpretation of Sobol\'{} indices requires an orthogonal distribution on the inputs \citep{song2016shapley}.
\textit{Shapley effects} \citep{shapley1952,song2016shapley} form another class of variance-based global sensitivity indices that was first introduced in the context of game theory but has only recently been gaining traction in the statistics literature \citep{owen2014sobol}.
Although the additional computation required to compute Shapley effects might render them unnecessary if the inputs are known to be independent, Shapley effects remain interpretable even if the inputs are correlated \citep{song2016shapley} and hence are the more reasonable option in such a case. 

If the function of interest is known and has a simple enough form, its exact Shapley effects can sometimes be computed analytically, particularly when the required integrals can be computed easily.
Otherwise, the Shapley effects can be estimated using values generated from the function. Many existing methods assume the function can be evaluated cheaply and without observation noise and indeed work well in such a scenario.
Figure~\ref{fig:ppMC} shows various such Shapley-effect estimators \citep{song2016shapley,benoumechiara2019shapley,broto2020variance,plischke2021computing,goda2021simple} applied to $n$ observations generated from a function (defined in the figure caption) evaluated on i.i.d.\ inputs drawn uniformly from the hypercube $[0,1]^{5}$. 
When the function values are observed without noise, these methods track the $g$-function's true Shapley-effects very well. 
But when independent and identically distributed (i.i.d.) Gaussian noise with mean zero and moderate variance (defined in the figure caption) is added, these methods struggle to capture the true values even when the number of observations increases dramatically to compensate for the observation noise.  

\begin{figure}[h!]
\centering
\includegraphics[width=0.9\textwidth]{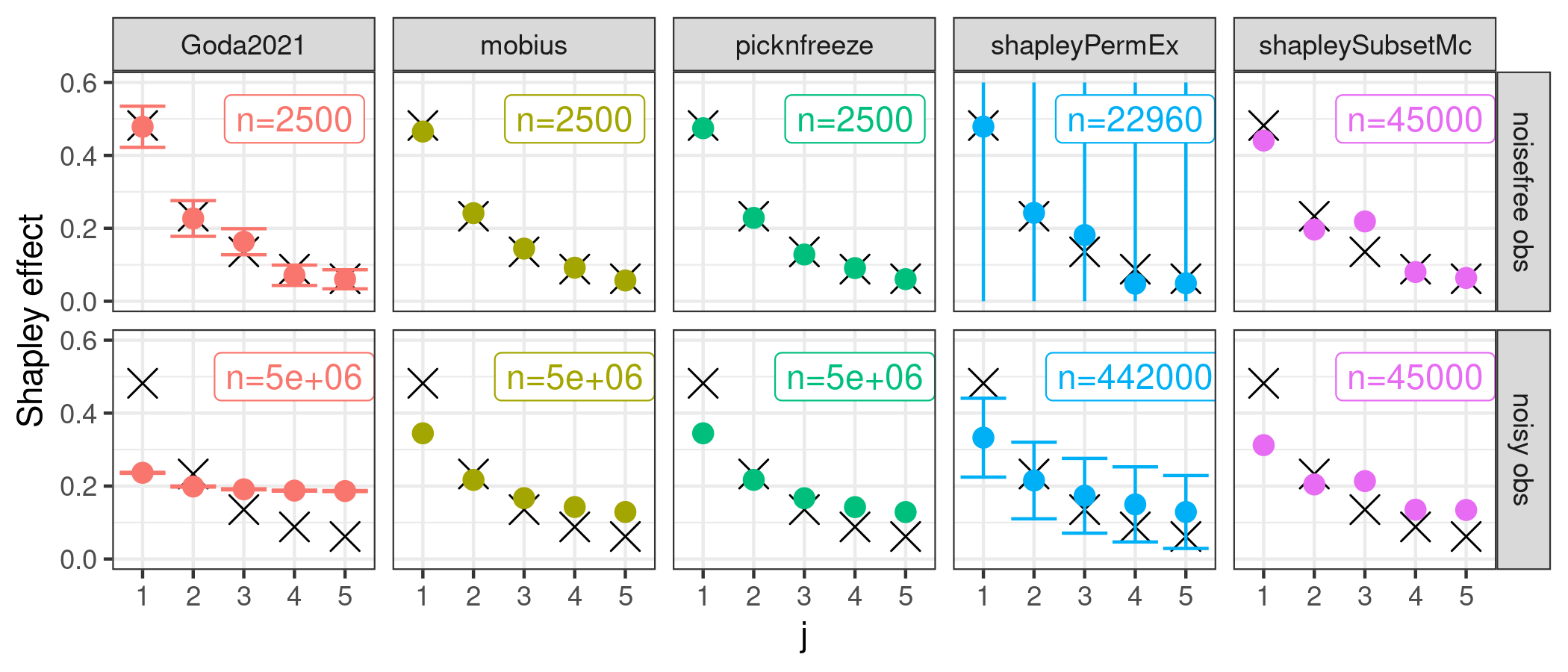}
\caption{Shapley-effect estimates of various existing methods trained on data drawn from the Sobol\'{} $g$-function $f(\x)=\prod_{k=1}^{5} \frac{|4x_k-2| + (k-1)/2}{1+(k-1)/2}$. Crosses represent the true Shapley-effect values. 
Function values are evaluated 
at $n$ i.i.d.\ inputs drawn uniformly from the hypercube $[0,1]^5$. In the top row, function values are observed without noise. In the bottom row, the observations are function values plus i.i.d.\ Gaussian noise with mean zero and variance $0.25\times3.076$, where $3.076$ is the variance of the $g$-function under a uniform distribution on $[0,1]^5$. Each column represents an estimation method: ``Goda2021'' is from \cite{goda2021simple}; ``mobius'' and ``picknfreeze'' are from \cite{plischke2021computing}; ``shapleyPermEx'' and ``shapleySubsetMc'' are from \cite{sensitivityRpackage}. Error bars represent approximate or exact $95\%$ confidence intervals as implemented by the method which aim to capture the variability induced by the Monte Carlo approximation of expectations.}
\label{fig:ppMC}
\end{figure}

For noisy function observations, one can first estimate the function and then compute sensitivity indices of the estimated function as a post-processing step.
One option is to fit a metamodel to the observations; the fitted metamodel then serves as the estimated function. (This approach is also useful in noisefree settings when the function can only be sparsely evaluated and the fitted metamodel can be evaluated cheaply.)
Popular metamodels include 
the Gaussian Process (GP), 
Bayesian multivariate adaptive regression splines (BMARS) \citep{denison1998bayesian}, 
generalized polynomial chaos expansions (PCE) \citep{Sudret2008}, 
treed GPs \cite{Gramacy2010}, 
dynamic trees \citep{Gramacy2013}, 
Gaussian radial basis function \citep{wu2016global}, 
artificial neural networks \citep{li2016accelerate},
and deep GPs \citep{radaideh2020surrogate}.
This paper makes its contributions using Bayesian Additive Regression Trees (BART) \citep{chipman2010bart} which is an increasingly popular tool for complex regression problems and as emulators of expensive computer simulations \citep{Chipman12,Gramacy16,horiguchi2022using}.
BART is a nonparametric sum-of-trees model embedded in a Bayesian inferential framework. 
Unlike many other metamodels, BART can easily incorporate categorical inputs, avoids strong parametric assumptions, and is relatively quick to fit even on a large number of observations. 
BART even has been shown to be resilient to the inclusion of inert inputs, particularly when the BART prior incorporates either the sparsity-inducing Dirichlet prior of \cite{Linero18} or the spike-and-tree prior of \cite{vanderPasRockova17,LRW18}.
Furthermore, the Bayesian framework provides natural uncertainty quantification for both predictions and sensitivity-index estimates.

Some metamodels struggle more than others with the two stages in the above approach, namely fitting the metamodel, then using the fitted metamodel to estimate the sensitivity indices.
Regarding the first stage, many of these metamodel-based approaches struggle to fit if the number of inputs $p$ and function evaluations $n$ are not small. 
A GP has $O(n^3)$ computation time and struggles to fit for even $p=10$. 
PCE has been fit for $p=25$, but it has been noted that PCE struggles to fit for larger $p$ \citep{Sudret2008,Crestaux09}.
BMARS works for $p=200$ for Sobol' indices \citep{francom2018sensitivity}.
Figure~\ref{fig:simstudy500} of this paper provides an example where BART fits to a $p=500$ scenario with $d=250$ active variables.
Regarding the second stage, if a metamodel is cheap to evaluate, then the fitted metamodel's Shapley effects can be estimated using Monte Carlo integration of the Shapley-effect integrals, as done in Algorithm~1 from \cite{song2016shapley} or a parallelized version of it \citep{zhang2024variable}. However, this will create another layer of approximation error that can be avoided if the metamodel allows for exact computation. On this front,
BART \citep{horiguchi2021assessing}, BMARS \citep{francom2018sensitivity}, and PCE \citep{Sudret2008} have closed-form expressions for Sobol\'{} indices (and thus for Shapley effects) that can be computed exactly once the metamodel is fit. 
Such expressions also exist for GPs with polynomial mean and either a separable Gaussian, Bohman, or cubic correlation function \citep{Oakley2004,Chen2005,Chen2006,Marrel2009,Moon2010,svenson2014estimating,Santner18}. 
Table~1 in the Supplementary Material summarizes these metamodel properties.

To our knowledge, this article is the first to provide an estimator of a function's \textit{Shapley effects} that is computationally tractable for a relatively large number of inputs and function evaluations, as well as provide theoretical guarantees of consistency in the context of nonparametric regression where the function is observed with noise.  
BART approximates a function by a piecewise-constant function whose exact Sobol\'{} indices are provided by  \cite{horiguchi2021assessing} and can be easily computed (we will refer to these as ``BART-based Sobol\'{} indices'' for the rest of this article).
Section~\ref{sec:prelim} will show these closed-form expressions can also be used to compute BART-based Shapley effects, but because the number of expressions to compute increases dramatically, Section~\ref{sec:computation} discusses computationally friendly approximations. 
On the other hand, our contraction-rate results rely heavily on recent BART theory from \cite{jeong2023art}, who introduce the large class of sparse piecewise heterogeneous anisotropic Hölder functions and show that over this function class, the contraction rate for Bayesian forests is optimal up to a logarithmic factor. 

This article is organized as follows. 
Section~\ref{sec:prelim} reviews BART, Sobol\'{} indices, and Shapley effects. 
Section~\ref{sec:computation} provides our main theoretical posterior contraction results and discusses the computation of BART-based Shapley effects.  Section~\ref{sec:numerical} showcases their performance on numerical examples, including data from the En-Roads climate simulator (analogous discussion for BART-based Sobol\'{} indices can be found in \cite{horiguchi2021assessing}).
Section~\ref{sec:discussion} provides discussion on future work. 
Our results on posterior contraction for BART-based Sobol\'{} indices and Shapley effects, as well as proofs of these results, are included as Supplementary Material.

\section{Review}
\label{sec:prelim}

Mirroring \cite{jeong2023art}, this article considers regression settings with either a fixed or random design. 
The regression model with \textit{fixed} design is
\begin{align}
\label{eq:fixeddesign}
Y_i = f_0(\x_i) + \varepsilon_i, \qquad \varepsilon_i \sim N(0, \sigma^2_0), \qquad i = 1, \ldots, n,
\end{align}
where $\sigma^2_0 < \infty$ and each covariate $\x_i \in [0,1]^p$ is fixed.  
A fixed design would be assumed if, for example, the trees in BART are allowed to split only on observed covariate values (which was a specification used in the seminal BART paper \citep{chipman2010bart}) or on dyadic midpoints of the domain. 
The regression model with \textit{random} design is
\begin{align}
\label{eq:randomdesign}
Y_i = f_0(\X_i) + \varepsilon_i, \qquad \X_i \sim \pi, \qquad \varepsilon_i \sim N(0, \sigma^2_0), \qquad i = 1, \ldots, n,
\end{align}
where $\sigma^2_0 < \infty$, each $\X_i \in [0,1]^p$ is a $p$-dimensional random covariate, and $\pi$ is a probability measure such that $\text{supp}(\pi) \subseteq [0,1]^p$.
A random design would be assumed for estimation problems such as density estimation or regression/classification with random design.
Our posterior contraction results deal separately with fixed or random designs. 

\subsection{BART}
\label{sec:bartreview}

\begin{figure}[h!]
\centering
    \begin{tikzpicture}[->,>=stealth', level/.style={sibling distance = 4cm/#1, level distance = 1.7cm}, scale=0.8, transform shape]
        \node [treenode, color=red] {$x_2 < 0.7$}
        child {
            node [treenode, color=blue] {$x_1 < 0.2$} 
            child { node [terminal] {$\mu_1$} }
            child { node [terminal] {$\mu_2$} }
        }
        child {
            node [treenode, color=green] {$x_1 < 0.4$}      
            child { node [terminal] {$\mu_3$} }
            child { node [terminal] {$\mu_4$} }
        }
    ;
    \end{tikzpicture}
    ~
    \begin{tikzpicture}[scale=0.65]
        \begin{axis}[
        xmin=0, xmax=1, xlabel={$x_1$}, 
        ymin=0, ymax=1, ylabel={$x_2$}]
            \addplot [dashed, thick, color=red] table {
                0 0.7
                1 0.7
            };
            \addplot [dashed, thick, color=blue] table {
                0.2 0
                0.2 0.7
            };
            \addplot [dashed, thick, color=green] table {
                0.4 0.7
                0.4 1
            };
            \node[scale=1.5] at (axis cs:0.1,0.35) {$\mu_1$};
            \node[scale=1.5] at (axis cs:0.6,0.35) {$\mu_2$};
            \node[scale=1.5] at (axis cs:0.2,0.85) {$\mu_3$};
            \node[scale=1.5] at (axis cs:0.7,0.85) {$\mu_4$};
        \end{axis}
    \end{tikzpicture}
\caption{An example tree shown graphically (left) and as a piecewise-constant regression function (right) on the input space $[0,1]^2$.}
\label{fig:tree1}
\end{figure}

In a regression setting in the form of either \eqref{eq:fixeddesign} and \eqref{eq:randomdesign}, a BART model approximates the unknown function $f_0$ by a sum of $T$ regression trees:
\begin{equation}
f_0(\cdot) \approx \sum_{t=1}^T g(\cdot; \Theta_t),
\label{eq:modelSOT}
\end{equation}
where each regression-tree function $g(\bm{\cdot}; \Theta_t)\colon [0,1]^p \rightarrow \mathbb{R}$ is piecewise constant over the input space.
Each parameter set $\Theta_t$ determines a partition of the input space $[0,1]^p$ into boxes (i.e. hyperrectangles) and the fitted response values assigned to each partition piece. 
The partition is induced by recursively applying binary splitting rules;
Figure \ref{fig:tree1} shows an illustrative example. 
To regularize the model fit, the BART prior over the parameters $\{\Theta_t\}_{t=1}^T$ keeps the individual tree effects small, which causes each function $g(\bm{\cdot}; \Theta_t)$ to contribute a small portion to the total approximation of $f_0$. 
The expected response $\E\big[Y(\x) \mid \{\Theta_t\}_{t=1}^T \big]$ at a given input $\x$ is then the sum of each contribution $g(\x; \Theta_t)$.

Though the right hand side of \eqref{eq:modelSOT} is piecewise constant, \cite{jeong2023art} shows that under certain conditions, BART can approximate the unknown function $f_0$ (which itself need not be piecewise constant) arbitrarily closely with attractive posterior contraction rates.
For space consideration, the Supplementary Materials will describe the types of functions that BART can capture and the conditions made in the theorems of \cite{jeong2023art} that our contraction-rate results rely on.

\subsection{Sobol\'{} indices}
\label{sec:sobol}

Let $L^2 \equiv L^2([0,1]^p)$ denote the space of real-valued, square-integrable functions on the hypercube $[0,1]^p$.
\cite{sobol1990sensitivity,Sobol93} shows that if the random variable $\X$ follows an orthogonal distribution whose support is $[0,1]^p$ and if $f \in L^2$, 
then the variance of $f(\X)$ can be decomposed into a sum of terms attributed to single inputs or to interactions between sets of inputs:
\begin{align}
\label{eq:variancedecomposition}
    \V f(\X) = \sum_{j=1}^p V_j + \sum_{j=1}^p \sum_{k<j} V_{jk} + \cdots + V_{1,2,\ldots,p}
\end{align}
where we recursively define for each variable index set $P \subseteq [p]$
\begin{align*}
    V_P \coloneqq \V [\E \{f(\X) \mid \X_P\}] - \sum_{Q \subset P} V_Q
\end{align*}
where we set $V_{\emptyset} = 0$ and the relation $\subset$ denotes a strict subset.
For any variable index $j \in [p]$, the term $V_{\{j\}} = V_j$ is known as the $j$th (unnormalized) first-order (or main-effect) Sobol\'{} index, 
and the sum $T_j = \sum_{P \subseteq ([p] \setminus \{j\})} V_{P \cup \{j\}}$ is known as the $j$th (unnormalized) total-effect Sobol\'{} index.
We note that $T_j \geq V_j \geq 0$ for all $j \in [p]$.

The $V_P$ terms in \eqref{eq:variancedecomposition} are often divided by the total variance to produce the normalized terms $V_P / [\V f(\X)]$, 
which have the nice interpretation of being the proportion of the total variance attributed to the interaction between the variables whose indices are in the index set $P$.
If $P$ is the singleton $\{j\}$, then the normalized term $V_j / [\V f(\X)]$ can be interpreted as the proportion of the total variance attributed to variable $j$ by itself.
Despite this nice interpretation, the remainder of the article will assume that such indices are unnormalized unless otherwise stated.

To see why these indices' interpretation requires $\X$ to follow an orthogonal distribution, 
we extend the definition of $V_P$ by allowing $\X$ to follow a possibly non-orthogonal distribution $\pi$ whose support is $[0,1]^p$. 
We first define the functional $c_{P,\pi}\colon L^2 \rightarrow \R$ as 
\begin{equation}
\label{eq:c}
c_{P,\pi}(f) = \V_{\pi} [\E_{\pi} \{f(\X) \mid \X_P\}] 
\end{equation}
for any $f \in L^2$.
Then the generalized $V_P$ under the distribution $\pi$ is recursively defined as \[V_{P, \pi}(f) \coloneqq c_{P,\pi}(f) - \sum_{Q \subset P} V_{Q,\pi}(f),\] where again we set $V_{\emptyset,\pi}(f) = 0$. 
Similarly, we define the generalized $j$th total-effect term: \[T_{j, \pi}(f) = \sum_{P \subseteq ([p] \setminus \{j\})} V_{P \cup \{j\}, \pi}(f)\]
where $\subseteq$ denotes a subset that is not necessarily strict.
Recall that if $\pi$ is orthogonal and $f \in L^2$, then $T_{j, \pi}(f) \geq V_{j, \pi}(f) \geq 0$ for all $j \in [p]$ and the variance decomposition \eqref{eq:variancedecomposition} (where orthogonality implies $V_P = V_{P, \pi}(f)$ for all $P \subseteq [p]$) holds.
However, Theorem 2 of \cite{song2016shapley} asserts the existence of a non-orthogonal distribution $\pi$ and a function $f \in L^2$ such that $\sum_{j=1}^p V_{j,\pi}(f) > \V_{\pi} f(\X) > \sum_{j=1}^p T_{j,\pi}(f)$. 
In such a case, these Sobol\'{} indices can no longer be interpreted as in the orthogonal case.

\subsection{Shapley effects}
\label{sec:shapley}

One way to measure variable activity, regardless of dependence among inputs, are the Shapley effects defined by \cite{song2016shapley} as the Shapley values in \cite{owen2014sobol} using the functional \eqref{eq:c} as the ``value'' or ``cost.''
For $j \in [p]$ the $j$th Shapley effect is defined as
\begin{equation} \label{eq:shapley}
    S_{j, \pi}(f) = (p!)^{-1} \sum_{P \subseteq ([p] \setminus \{j\})} (p - |P| - 1)! \, |P|! \, \big\{c_{P \cup \{j\},\pi}(f) - c_{P,\pi}(f)\big\},
\end{equation}
which has the desirable property $\sum_{j=1}^p S_{j, \pi}(f) = \V_{\pi} f(\X)$
for any distribution $\pi$ (possibly nonorthogonal) whose support is $[0,1]^p$. 
Hence, the $j$th (normalized) Shapley effect can be nicely interpreted as the contribution of input $j$ to the total output variance.
Furthermore, if $\pi$ is orthogonal, then 
\begin{equation}
    V_{j, \pi}(f) \leq S_{j, \pi}(f) \leq T_{j, \pi}(f)
    \label{eq:sobolshapley}
\end{equation}
for any $f \in L^2$ and $j \in [p]$ \citep[Section 3]{owen2014sobol},
i.e. the $j$th Shapley effect is bounded between the $j$th main-effect and total-effect Sobol\'{} index. 

Calculating \eqref{eq:shapley} can be prohibitively costly due to it being a sum of values \eqref{eq:c} over all subsets of a set $[p]\setminus \{j\}$.
Its computational tractability will be discussed in Section~\ref{sec:computation}.

\section{Main results and computation of Shapley effects}
\label{sec:computation}

This section will address theoretical support and computation of Shapley effects using a BART metamodel. The metamodel-based approach in estimating Shapley effects has two approximation layers: how well the metamodel approximates (functionals of) the underlying regression function $f_0$, and how well the Shapley-effect estimates approximate the Shapley effects of the metamodel function.

\subsection{Consistency Result}
For the second layer, we establish posterior consistency for our BART-based Shapley effects using (first-layer) posterior consistency for BART from \cite{jeong2023art}.
The required theoretical results, fully developed in the Supplementary Material, characterize the posterior contraction as the dataset size $n\rightarrow \infty$. The contraction rate quantifies how quickly the posterior distribution approaches the underlying function's true Shapley effects. In particular, for random designs, we have the following. 
\begin{corollary}
\label{corr:randomdesign}
Under the assumptions of Theorem 4 of \cite{jeong2023art} -- Assumptions (A1), (A2), (A3$\ast$), (A4), (A5), (A6$\ast$), and (A7), and the prior assigned through (P1), (P2$\ast$), and (P3$\ast$) -- and Theorem~3 in Section~S7, there exist positive constants $L_{V,\pi,|P|}$, $L_{T,\pi}$, and $L_S$ such that 
as $n \rightarrow \infty$ for $\epsilon_n$ in Eq.\ (S5.4) in Section~S7, 
\begin{align*}
\E_0 \Pi \Big\{(f, \sigma^2)\colon |V_{P, \pi}(f) - V_{P, \pi}(f_0)| + |\sigma^2 - \sigma^2_0| > L_{V,\pi,|P|} \epsilon_n  \Big| Y_1, \ldots, Y_n \Big\} &\rightarrow 0, \\
\E_0 \Pi \Big\{(f, \sigma^2)\colon |T_{j, \pi}(f) - T_{j, \pi}(f_0)| + |\sigma^2 - \sigma^2_0| > L_{T,\pi} \epsilon_n  \Big| Y_1, \ldots, Y_n \Big\} &\rightarrow 0, \\ 
\text{ and } \E_0 \Pi \Big\{(f, \sigma^2)\colon |S_{j, \pi}(f) - S_{j, \pi}(f_0)| + |\sigma^2 - \sigma^2_0| > L_S \epsilon_n  \Big| Y_1, \ldots, Y_n \Big\} &\rightarrow 0.
\end{align*}
\end{corollary}
The supplement contains a similar result for fixed designs, as well as proofs for all theoretical results.

\subsection{Shapley Effect Computation}

The remainder of this section will address the computation of Shapley effects, and how well the Shapley-effect estimates approximate the Shapley effects of the metamodel function. Since BART is a Bayesian metamodel, our focus is to address the computational aspects when $f_0$ is approximated by $n_{draw}$ posterior draws $\hat{f}^{(1)}, \ldots, \hat{f}^{(n_{draw})}$ of the fitted metamodel.

For each input $j \in [p]$, we can construct a posterior distribution for the $j$th Shapley effect $S_{j,\pi}(f_0)$ of $f_0$ using the $n_{draw}$ values 
\begin{align} \label{eq:shapleydrawexact}
    S_{j,\pi}(\hat{f}^{(i)}) 
    &=  \sum_{P \subseteq ([p] \setminus \{j\})} \frac{(p - |P| - 1)! \, |P|!}{p!}  \Big[c_{P \cup \{j\},\pi}(\hat{f}^{(i)}) - c_{P,\pi}(\hat{f}^{(i)})\Big],
\end{align}
for $i=1,\ldots,n_{draw}$.
We can use the sample mean of $\{S_{j,\pi}(\hat{f}^{(i)})\}_{i=1}^{n_{draw}}$ as a point estimate for $S_{j,\pi}(f_0)$, and use the end points of the middle $95\%$ values as a $95\%$ credible interval for $S_{j,\pi}(f_0)$.
For each $\hat{f}^{(i)}$, computing its $p$ Shapley effects would require computing the cost function \eqref{eq:c} for $2^p$ subsets of the set $[p]$.
The exponential increase in $p$ is undesirable, but also the calculation of even a single cost function might be computationally intractable if $p$ is large enough. 

\subsection{Sum over subsets}
\label{sec:sumoversubsets}

We first tackle the exponential increase in $p$. 
If the inputs are orthogonal, then \eqref{eq:shapleydrawexact} is bounded between the main-effect and total-effect Sobol\'{} indices, so we can avoid computing \eqref{eq:shapleydrawexact} entirely by crudely estimating it with, for example, the mean of the two Sobol\'{} indices.
If the inputs are not assumed to be orthogonal, we can reduce the increase from exponential to linear by using the following random-subset approach. 
Rather than compute the cost difference in \eqref{eq:shapleydrawexact} for all $2^{p-1}$ subsets, we instead compute the cost difference for only a small number $m$ of subsets that are randomly created by including each $j' \in ([p]\setminus \{j\})$ with probability $0.5$. 
(We could incorporate prior information about which inputs $j'$ are important by increasing or decreasing the probability of including any particular $j'$. We could also incorporate prior information about interactions between groups of inputs by increasing the probability that they appear together in a subset and decreasing the probability that they appear separately. We save further exploration for future work.)
Hence any subset $P \subseteq ([p]\setminus \{j\})$ is chosen with probability $|P|!(p-(|P|+1))! / (p!)$.
Under this approach, we approximate \eqref{eq:shapleydrawexact} by
\begin{align}
    S_{j,\pi}(\hat{f}^{(i)})  \approx m^{-1} \sum_{l=1}^m \Big\{c_{P_l^{(i)} \cup \{j\},\pi}(\hat{f}^{(i)}) - c_{P_l^{(i)},\pi}(\hat{f}^{(i)})\Big\}, 
    \label{eq:shapleydrawrandom}
\end{align}
where $P_l^{(i)}$ is the $l$th of $m$ randomly drawn subsets of $([p]\setminus\{j\})$ for the $i$th posterior draw.
Hence this approach reduces the number of cost-function calculations from $n_{draw} \times 2^p$ to $n_{draw} \times p \times (2m)$.

(As an aside, this approach is equivalent to how subsets are chosen in the random-permutation scheme of \cite{castro2009polynomial}: under this scheme, any subset $P \subseteq ([p]\setminus\{j\})$ can be obtained by any permutation of $[p]$ whose first $|P|$ elements are the elements in $P$ and whose $(|P|+1)$th element is $j$. Because there are $|P|!(p-(|P|+1))!$ such permutations of $[p]$ and each permutation of $[p]$ is drawn with equal probability $(p!)^{-1}$, the subset $P$ is selected with probability $|P|!(p-(|P|+1))! / (p!)$. \cite{song2016shapley,van2018new,yang2024fast} provide improvements and modifications on this ``simple random sampling'' of permutations.)

What value of $m$ should be used for \eqref{eq:shapleydrawrandom}?
Both its computational cost and its fidelity to \eqref{eq:shapleydrawexact} increase with $m \times n_{draw}$. We argue that if $n_{draw} \gg 1$ (a requirement for any decent posterior summary of the surrogate model), then $m=1$ random subset will suffice. Because the randomness from the original MCMC mechanism is independent of how the random subsets are chosen, \eqref{eq:shapleydrawrandom} is just a noisier version of \eqref{eq:shapleydrawexact}.
Our approach does not inflate the correlation between any two MCMC draws.
Also, it does not introduce any additional bias, seeing as each random subset $P$ is i.i.d.\ and is drawn with probability exactly equal to the weight $(p-|P|-1)! |P|! / (p!)$ in the Shapley-effect expression \eqref{eq:shapley}. 
Now we consider the additional variability induced by the random subsets.
For each $j\in[p]$, we aim to approximate the cumulative distribution function (CDF) $F_{n,j}$ of the posterior distribution of the $j$th Shapley effect by an empirical CDF comprised of $n_{draw}$ MCMC draws. 
(Here the sample size $n$ is fixed and finite, but if the metamodel has posterior consistency, then $\lim_{n\rightarrow \infty} F_{n,j}$ would be a step function consisting of a single jump located at the true $j$th Shapley effect of the underlying regression function $f_0$.) 
Generally speaking, MCMC draws are thinned in order to be approximately i.i.d.\ from the limiting distribution. 
If we assume i.i.d.\ MCMC draws, Donsker's theorem \citep{donsker1952justification} tells us that as $n_{draw} \rightarrow \infty$, both empirical CDFs --- one using random subsets, the other using exact subsets --- will converge to the target CDF $F_{n,j}$ at the same rate $\mathcal{O}(n_{draw}^{-1/2})$, regardless of how many random subsets are used. 
Thus, we do not need to increase the number of random subsets in order to reduce the additional variability, since this variability will already shrink to zero as $n_{draw} \rightarrow \infty$. 
Hence we use $m=1$ for all experiments in this paper.

\subsection{Cost calculation}
\label{sec:costcalculation}

We now consider how each calculation of \eqref{eq:c} is affected by which metamodel is used.
For any metamodel that can be evaluated cheaply, Algorithm~1 of \cite{song2016shapley} can be used to approximate the integrals in \eqref{eq:c} for the metamodel by first sampling from the input distribution many times and then evaluating the metamodel on the many generated inputs. However, keeping the resulting integral approximation error small will likely require the number of random permutations and hence the computation time for each integral to grow at least linearly in $p$ \citep{tang2024note}, and thus for all $p$ inputs the computation time to grow at least quadratically in $p$.
Additionally, approximating \eqref{eq:c} in this way produces two inference issues illustrated in the independent-inputs example in Section~\ref{sec:numerical5}. First, even though the exact cost difference in \eqref{eq:shapley} is nonnegative for independent inputs by definition, the \textit{estimated} cost difference can be negative and hence often produces negative Shapley-effect values for inert inputs (see e.g., the Morris function GP estimates in Section~\ref{sec:numerical5}), which creates interpretability issues.
Second, the variability due to approximating \eqref{eq:c} is much larger than the variability due to the random subsets/permutations (see Section~\ref{sec:numerical5} for more detail on this point). 

For these reasons, it is desirable to compute \eqref{eq:c} exactly, which can be done for some metamodels.
For BART, a closed-form expression for \eqref{eq:c} can be found using Theorem~1 of \cite{horiguchi2021assessing}.
For Bayesian MARS, \cite{francom2018sensitivity} provides a closed-form expression for estimating Sobol\'{} indices and contains a numerical example with $p=200$.
For a GP, a closed-form expression can be found for certain correlation functions, but these functions are typically restrictive i.e., assume stationarity and isotropy.
Exact computation of each of these expressions is easy when the inputs are independent, but otherwise is challenging, in which case we will resort to using Algorithm~1 of \cite{song2016shapley}.

\section{Numerical examples}
\label{sec:numerical}

This section explores the numerical performance of BART-based Shapley effects. 
(BART-based Sobol\'{} indices are evaluated in detail in \cite{horiguchi2021assessing} and \cite{horiguchi2020bayesian} and hence are not evaluated in this paper.)
Details of the test functions used in this section are listed in Section~S6 in the Supplementary Material.

\subsection{Exact vs Monte Carlo cost calculation}
\label{sec:numerical5}

This section explores the computational and accuracy differences between computing the cost exactly as in \eqref{eq:shapleydrawexact} and estimating the cost using Algorithm~1 of \cite{song2016shapley} when the inputs are independent.
For our first set of experiments, we create a dataset with $n=50p$ observations and noise variance $\sigma^2_0 = 0.25 Var(f(\X))$ from \eqref{eq:fixeddesign} for each test function $f$ and each $p \in \{5,50,200\}$. 
To each dataset, we fit a BART model with $n_{draw}=1000$ posterior draws and $200$ trees with code from \cite{openbt}. 
For comparison, we also fit a Gaussian process (GP) model and estimate the Shapley effects of the fitted GP mean model using Algorithm~1 of \cite{song2016shapley} as implemented in \texttt{shapleyPermRand}, which was the only function in the \texttt{sensitivity} R package \citep{sensitivityRpackage} that we found could fit to our $p=50$ data sets in a reasonable amount of time. 
Parameter specifications are in the caption of Figure~\ref{fig:simstudygp}.
For the $p=200$ cases, we could not read the large GP-model file sizes into \text{R} and hence do not include these results. 

\begin{figure}[h!]
    \centering
    \includegraphics[width=\textwidth]{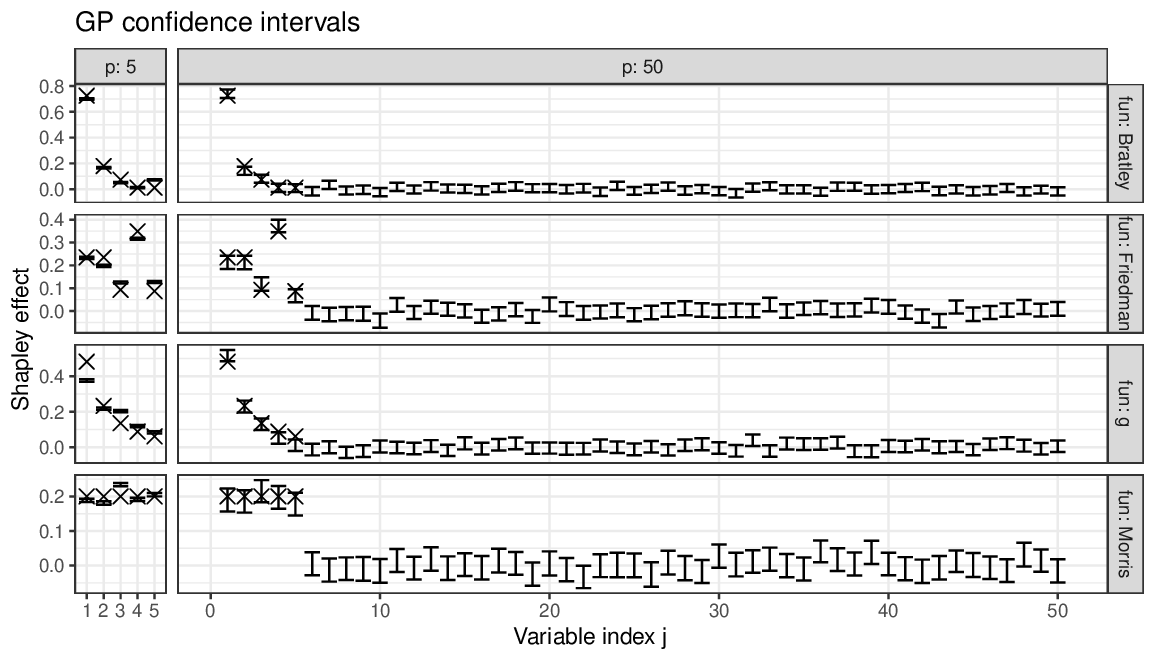}
    \caption{$95\%$ confidence intervals (computed by \texttt{sensitivity}) for the Shapley-effect estimates from a GP fit to $n=50p$ observations with $d=5$ active variables. Crosses indicate a function's true Shapley effects. The \texttt{shapleyPermRand} approach samples $N_V + m(p-1) N_O N_I$ inputs to estimate expectation. 
    Per \cite{song2016shapley}, we set $N_V=10^5$ samples to estimate the total variance, and $N_O=1$ and $N_I=3$ to estimate the outer and inner expectations, respectively.
    For $p=5$ we use $m=10^5$ random permutations; for $p=50$ we reduce this to $m=3\times10^3$ to avoid numerical overflow.
    }
    \label{fig:simstudygp}
\end{figure}

\begin{figure}[h]
    \centering
    \includegraphics[width=\textwidth]{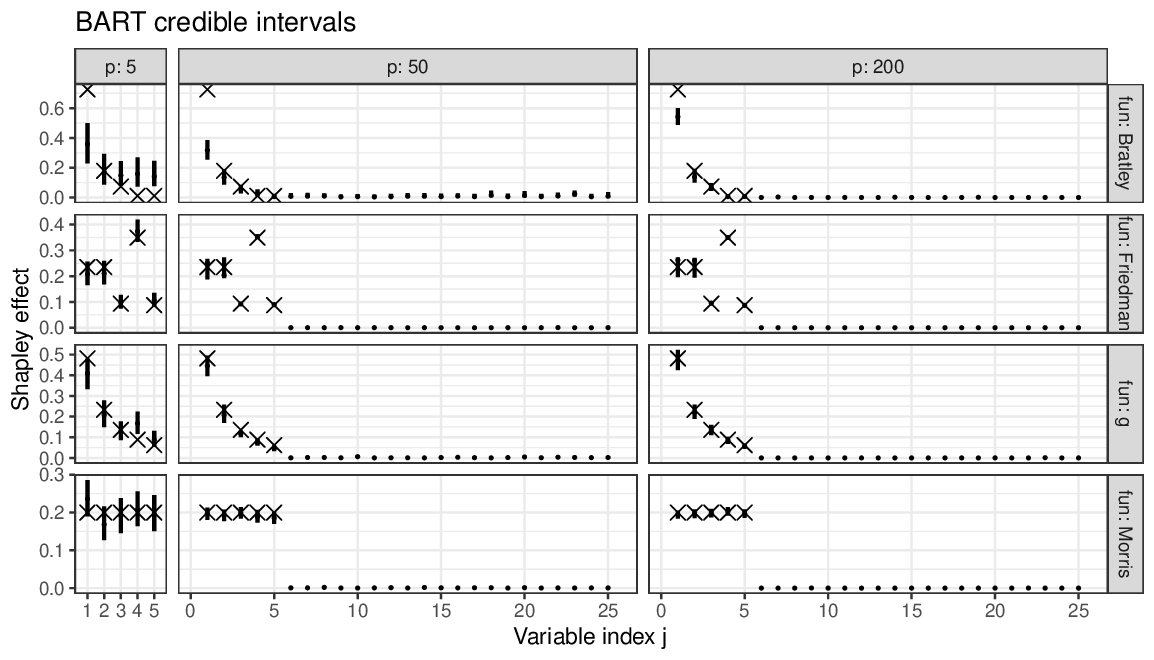}
    \caption{$95\%$ credible intervals (as in \eqref{eq:shapleydrawrandom}) over 1000 posterior draws for the Shapley-effect estimates from a BART model fit to $n=50p$ observations with $d=5$ active variables. Crosses indicate a function's true Shapley effects. 
    For $p=50$ and $p=200$, only the first 25 input variables are shown for space considerations.
    }
    \label{fig:simstudy}
\end{figure}

We first compare the GP estimates to the BART estimates. 
Figure~\ref{fig:simstudygp} shows the Shapley-effect confidence intervals of the GP approach as computed by the \texttt{sensitivity} package, and 
Figure~\ref{fig:simstudy} shows our Shapley-effect credible intervals using BART as defined in \eqref{eq:shapleydrawrandom}. 
The GP model seems to capture the large Shapley effects better than the BART model does, which might be explained by the fact that the data-generating functions are all continuously differentiable and thus are well suited for GPs.  
However, the GP model also seems to have more trouble setting the inactive variables to have zero estimated Shapley effect; indeed, for the $g$-function with $p=50$, the confidence intervals for many of the inactive variables are higher than the interval for the active variable $j=5$.
Furthermore, the GP confidence intervals for all {\em inactive} variables cover negative values (as computed by the \texttt{sensitivity} package), even though Shapley effects are nonnegative by definition.  In contrast, the BART credible intervals never cover negative values.
Finally, the GP confidence intervals are wider than the BART credible intervals for the inert inputs, which is surprising since both intervals capture the variability due to the Shapley-effect estimation of the metamodel, but the BART intervals also capture the metamodel's posterior uncertainty (i.e., how accurately the metamodel fits the true regression function), whereas the GP intervals do not. (We emphasize that this stems from the different methods of estimating the Shapley effects of the respective metamodels --- see Section~\ref{sec:costcalculation} --- rather than from the difference between GP and BART.) Hence we can conclude that in this scenario, the $m=1$ subset approximation in Section~\ref{sec:sumoversubsets} produces negligible errors, especially compared to the approximation error from the estimation of \eqref{eq:c} even with $3000$ random permutations.

Theoretically, we could increase the number of random permutations in order to reduce the Monte Carlo error of estimating \eqref{eq:c} to an arbitrarily small amount, but as discussed in Section~\ref{sec:costcalculation}, maintaining a small approximation error for computing the cost function \eqref{eq:c} for all $p$ inputs will likely require the computation time to grow at least quadratically in $p$.
But how does the exact computation of \eqref{eq:c}, which we recall is currently only implemented for BART, scale with increasing $p$?
For two regression functions, $p=3,\ldots,10$, and fixed $n=500$, 
Figure~\ref{fig:p_runtime_n500} shows that the computation time of training a BART model appears to be constant in $p$.
The figure also shows that computing the Shapley effects of the fitted BART model using the exact \eqref{eq:c} appears to grow faster than quadratic in $p$. 
This implies that the exact-cost approach scales better in $p$ than the Monte Carlo approach if the number of random permutations increases in order to maintain a small Monte Carlo approximation error from estimating \eqref{eq:c}.
(Because the computation times for the exact-cost approach is specific to BART, the behavior may not generalize if the exact-cost approach is implemented for other metamodels.)

\begin{figure}[ht]
    \centering
    \includegraphics[width=\textwidth]{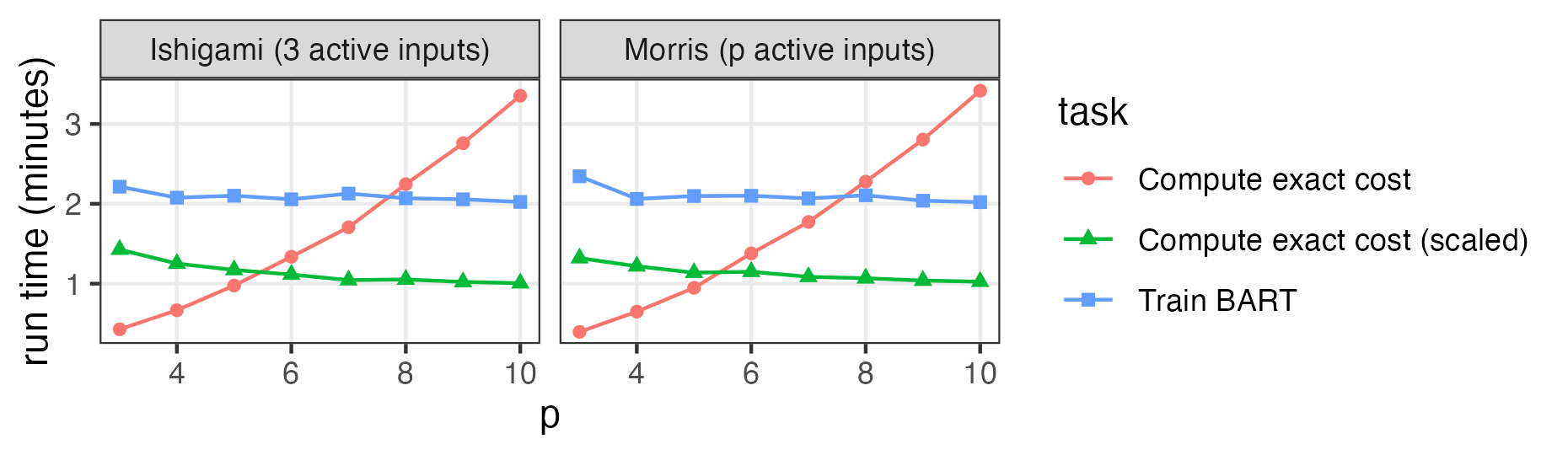}
    \caption{For $n=500$ and various $p$, the figure shows the run times of training the BART model (1000 posterior draws, 9000 burn-in draws, M1-chip 4-core laptop) and of computing the Shapley effects of the fitted BART model using the exact \eqref{eq:c}. ``Compute exact cost (scaled)'' indicates the latter run time divided by $p^2$ (and multiplied by 30 for better visual comparison) to get an upper bound on how it scales with $p$ for fixed $n$. 
    }
    \label{fig:p_runtime_n500}
\end{figure}

Similarly, Figure~\ref{fig:p_runtime_p3} explores how these calculations scale with increasing $n$ for fixed $p=3$.
The BART training time appears to grow sublinearly in $n$ (a little faster than rate $\sqrt{n}$), and the Shapley-effect calculation time using exact computation of \eqref{eq:c} appears to grow more slowly than $\log(\log(n))$.
This justifies the use of BART if $n$ increases with $p$.
(Section~5 of \cite{pratola2014parallel} studies the scalability of a parallel BART MCMC algorithm.)

\begin{figure}[ht]
    \centering
    \includegraphics[width=\textwidth]{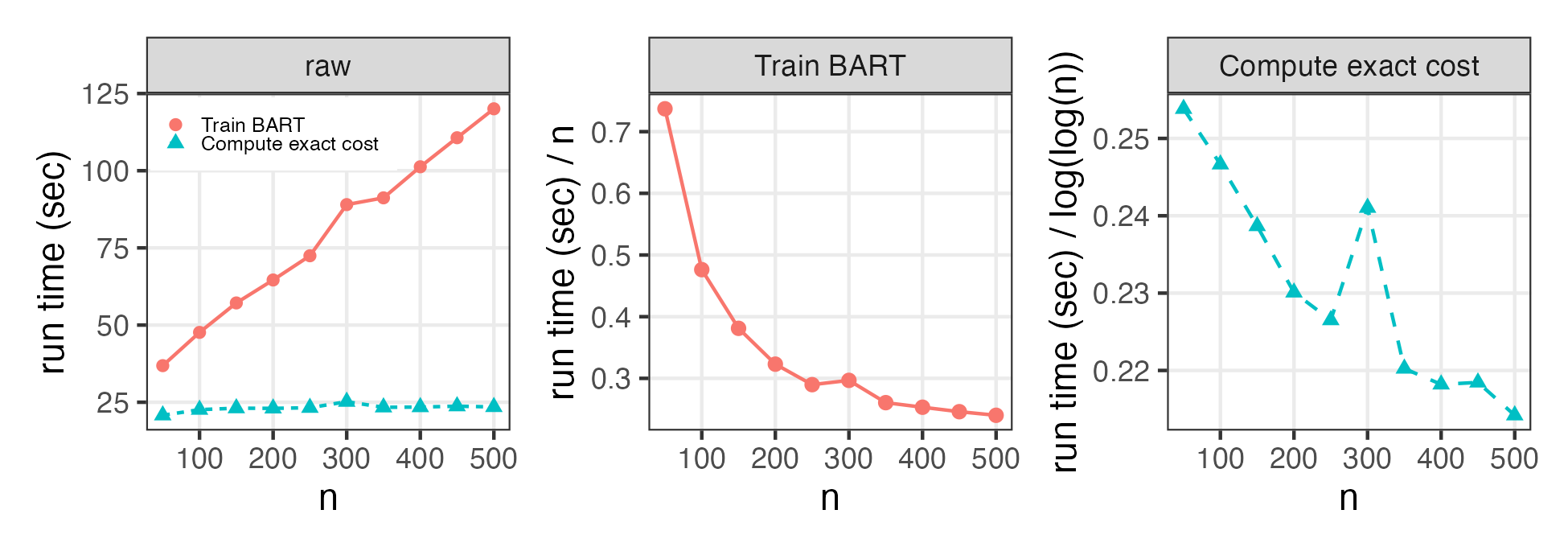}
    \caption{Left: run times of training a BART model (1000 posterior draws, 9000 burn-in draws, M1-chip 4-core laptop) and of computing the Shapley effects of the fitted BART model using the exact \eqref{eq:c} for $p=3$ and various $n$. Center and right panels scale these run times to get an upper bound on how the two run times scale with $n$ for fixed $p=3$. 
    }
    \label{fig:p_runtime_p3}
\end{figure}

We further examine the BART results in Figure~\ref{fig:simstudy}. 
For the Friedman and Morris functions, the true Shapley effects are contained in the credible intervals and are often near the center of the intervals. 
For the $g$-function, the $p=5$ scenario shows the credible intervals struggling a bit to capture the true Shapley effects, but the $p=200$ scenarios show better performance from the intervals.
This $p=200$ result becomes even more notable if we consider the fitted BART models do not use (P1)'s tree prior with Dirichlet sparsity from \cite{Linero18}, and that the $g$ function is purely a product of univariate functions. 
For the Bratley function, the intervals struggle quite a bit to capture the true Shapley effects.
For this challenging Bratley function, we next explore what parameters or priors should be changed to improve the Shapley-effect estimates.
Of the three directions we explored -- increasing the number of trees to $300$, weakening the tree-depth prior to encourage higher order interactions, and increasing $n$ --
only the third (with $200$ trees, the same tree-depth prior as in the first set of explorations, and $p=5$) yielded estimates closer to the true Shapley effects. 
This provides assurance that for these more challenging functions, the estimates can be close to the true Shapley effects if $n$ is large enough without having to change any other parameters or priors.

\begin{figure}[h]
\centering
\includegraphics[width=\textwidth]{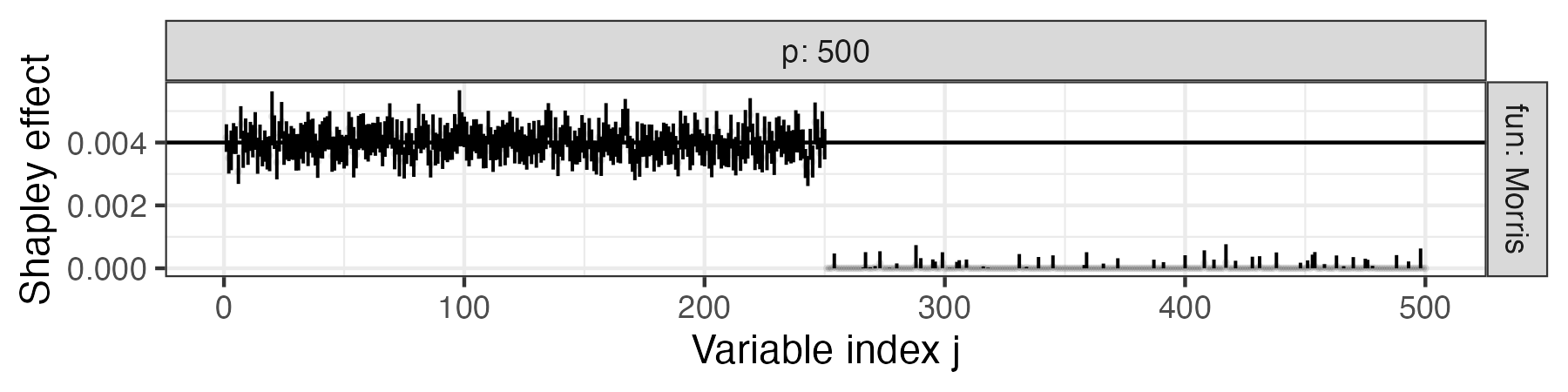}
\caption{$95\%$ credible intervals (as computed in \eqref{eq:shapleydrawrandom}) over 300 posterior draws for the Shapley-effect estimates from a BART model fit to $n=50p$ observations. Horizontal line corresponds to the function's true Shapley effects of the $d=250$ active variables.}
\label{fig:simstudy500}
\end{figure}

Finally, we explore the performance of BART-based Shapley effects for $d=250$ active input variables and input dimension $p=500$, which is a regime that bottlenecks most other methods. 
(We omit GP results here due to not being able to compute GP-based Shapley-effect estimates.)
For ease, we use the Morris function since its Shapley effects are $1/d$ for the $d$ active variables.
Figure~\ref{fig:simstudy500} shows that BART clearly distinguishes between the first 250 inputs (these intervals are centered around $1/d$) and the second 250 inputs (these intervals are centered around zero) for such a large $p$. 

\subsection{Robustness of Shapley-effect estimates to input correlation}
\label{sec:robustness}

Now we explore how sensitive the estimated Shapley effects are to varying degrees of correlation.
When the inputs are not independent, our proposed BART-based estimator requires being able to compute the input probability measure of various subsets of the input space $[0,1]^p$ (this point is discussed further in Section~\ref{sec:discussion}).
Hence, here we compute the cost function \eqref{eq:c} using Monte Carlo integration as implemented by the \texttt{sensitivity} R package.

\begin{figure}[h]
\centering
\includegraphics[width=\textwidth]{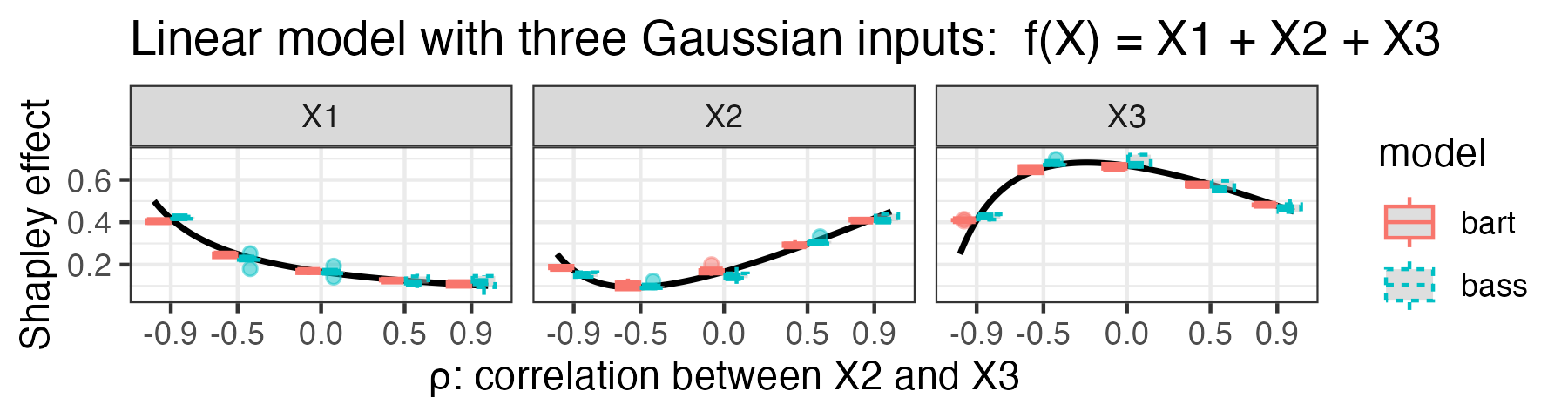}
\caption{Shapley effects estimated by BART and BMARS \citep[as implemented in the \texttt{BASS} R package,][]{bass2020}. The boxplot variability comes from 5 replicates of the process described in Section~\ref{sec:robustness}. The solid curves represent the function's true Shapley effects.}
\label{fig:inputcorr}
\end{figure}

It can be difficult to analytically compute Shapley effects for even moderately complicated functions. 
Here we use the regression function $f(\mathbf{x}) = x_1 + x_2 + x_3$, where the inputs have a trivariate Gaussian distribution with mean zero and covariance matrix $\Sigma$ with entries $\Sigma_{11} = \Sigma_{22} = 1$, $\Sigma_{33} = 4$, $\Sigma_{23} = \Sigma_{32} = 2 \rho$ (where $-1 < \rho < 1$), and zero for all other entries. 
Thus, the input distribution is parameterized by the correlation $\rho$ between input variables $X_2$ and $X_3$.
For each $\rho \in \{-0.9, -0.5, 0, 0.5, 0.9\}$, we generate $n=1000$ input values according to $\rho$, and then evaluate the regression function with additive Gaussian noise whose standard deviation is $0.1$ times the standard deviation of the regression function under $\rho$.
We then fit a BART model and a Bayesian MARS (BMARS) model to these observations before using the \texttt{shapleyPermEx()} function from the \texttt{sensitivity} R package to estimate the function's Shapley effects. (The parameters we use are \texttt{Nv=1000, No=100, Ni=3}.)
We repeat this process five times.

Figure~\ref{fig:inputcorr} shows the estimated Shapley effects for these five correlation values. 
We see that both BART and BMARS seem to recover the true Shapley effects even for large correlations between inputs 2 and 3.
There also does not seem to be any systematic performance difference between the two metamodels.
Hence, if we extrapolate these results to larger input dimensions and account for the curse of dimensionality of the Monte Carlo approach to computing the cost function, we believe that the resulting large variability from the MC approach is likely to overshadow any performance difference due to the chosen metamodel.
This further motivates the task of computing the cost function exactly under dependent inputs (see Section~\ref{sec:discussion}).

\subsection{Application to climate simulator}
\label{sec:enroadsclimate}

Here we estimate Shapley effects from data generated from the En-ROADS climate simulator \citep{enroads}. This simulator is a mathematical model of how global temperature is influenced by changes in energy, land use, consumption, agriculture, and other factors. It is designed to be easily used by the general public. The model is an ordinary differential equation solved by Euler integration and synthesizes the important drivers of climate in a computationally efficient and easy-to-use web interface.

\begin{figure}
\centering
\includegraphics[width=\textwidth]{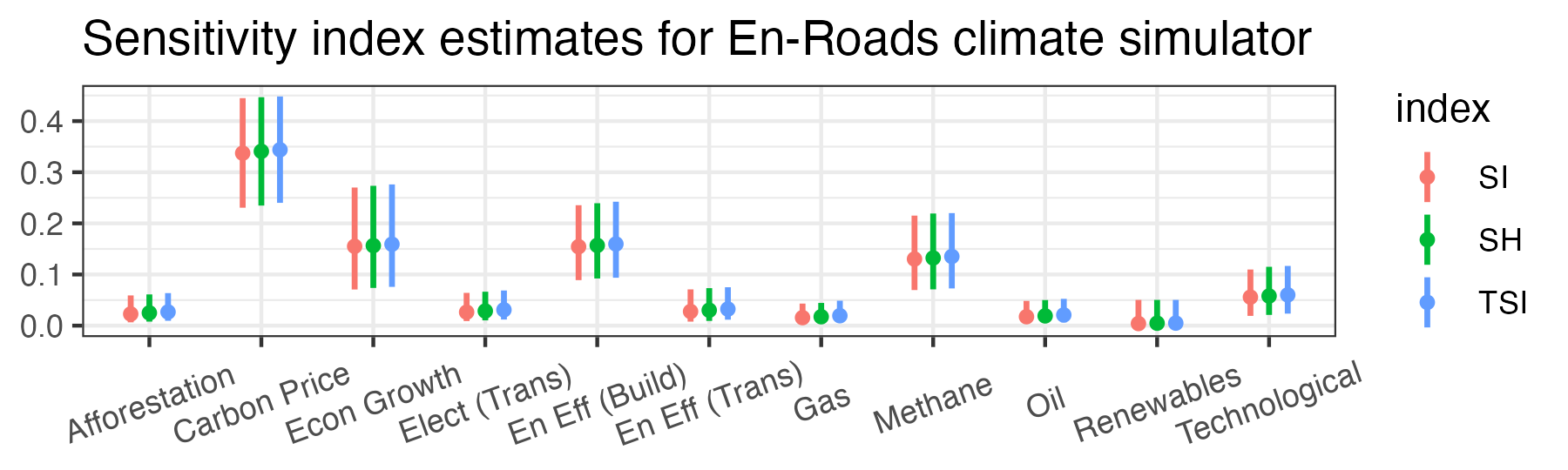}
\caption{$95\%$ credible intervals (as computed in \eqref{eq:shapleydrawrandom}) for (normalized) first-order Sobol\'{} indices (SI), Shapley effects (SH), and total-effect Sobol\'{} indices (TSI) over 1000 posterior draws from a BART model fit to climate simulator data \citep{enroads}.}
\label{fig:enroads}
\end{figure}

The data consists of $n=110$ observations with $p=11$ inputs and was collected using the scheme described in \cite{horiguchi2021assessing}.
To this data we fit a BART model and compute Shapley-effect estimates using the implementation in \cite{openbt} with $1000$ posterior draws, $200$ trees, and the remaining default parameter settings.

Figure~\ref{fig:enroads} shows the estimates for the first-order Sobol\'{} index, Shapley effect, and total-effect Sobol\'{} index of the 11 inputs. 
For each input, the relationship \eqref{eq:sobolshapley} between the three indices is shown. 
As expected given this relationship and the analysis in \cite{horiguchi2021assessing}, the small differences between the Shapley-effect estimates and the two Sobol\'{}-index estimates indicate small interaction effects between any group of inputs. 
Hence, takeaways about the impact of each input are the same as discussed in \cite{horiguchi2021assessing}.
In particular, the four most impactful inputs seem to be carbon price, energy efficiency of buildings, methane, and economic growth.

\section{Discussion}
\label{sec:discussion}

This article establishes posterior contraction rates for Sobol\'{}-index and Shapley-effect estimators computed using BART.
The proofs of our contration rates required proving a property similar to Lipschitz continuity for Sobol\'{} indices and Shapley effects
before using recent contraction-rate results that applies to function spaces with heterogeneous smoothness and sparsity in high dimensions and to fixed and random designs.
This article also illustrates the computational tractability and performance of BART-based Shapley effects on four different test functions under orthogonal inputs and $p=500$.
Code to fit BART models and compute Sobol\'{} index and Shapley effect estimates is found in \cite{openbt}.

Our theoretical consistency results apply to input distributions that are not orthogonal, and thus uncertainty quantification of the Shapley effects would maintain its validity under such distributions.
        However, to implement our approach under such distributions, we would need to be able to compute the cost function. Specifically, the input distribution would affect the values of the probability measure of the boxes that the BART ensemble partitions the input space into. Under independent inputs, the probability measure is simply the volume (i.e., the Lebesgue measure) of the boxes, which is what we currently have implemented. 
        Current Monte Carlo methods of approximating the cost function require being able sample from the input distribution.
        It is more useful (and more challenging) to learn the input distribution based on the observed covariates, and then use this learned distribution to estimate the cost function. 
        For our BART-based approach, this can be possibly achieved by replacing the volume of each hyperrectangle used to compute BART-based Sobol\'{} indices and Shapley effects with the proportion of observations that fall in each hyperrectangle.
        Another possible approach would be to incorporate a tree-based density estimation method suitable for higher-dimensional spaces, such as \cite{awaya2024hidden,awaya2024unsupervised}.
        We will reserve this exploration as future work.

\bibliographystyle{chicago}

\bibliography{mycites}

\newpage
\renewcommand\thesection{\Alph{section}}
\renewcommand\thesubsection{\thesection.\arabic{subsection}}
\setcounter{section}{0}

\bigskip
\begin{center}
{\large\bf SUPPLEMENTARY MATERIAL}
\end{center}

\section{Table summarizing properties of various metamodels under nonparametric regression.}

See Table~\ref{tbl:metamodels}.

\begin{table}
\scriptsize
\centering
\def\arraystretch{0.7}
\begin{tabular}{|c||p{1.6cm}|p{1.7cm}|c|p{1.6cm}|p{2.2cm}|p{1.6cm}|} 
\hline
& Consistency established? & Adapt to discontinuities in regression function? & UQ & Tractable to fit model for $p=250$? & Analytical expression for Shapley effects or Sobol\'{} indices? & Available code to estimate Shapley effects? \\ 
\hline \hline
\parbox[t]{2mm}{\rotatebox[origin=r]{90}{BART }} & yes \citep{jeong2023art} & yes \citep{jeong2023art} & Bayesian & yes (Section~5) & yes \citep{horiguchi2021assessing} & this paper \citep{openbt} \\ \hline
\parbox[t]{2mm}{\rotatebox[origin=r]{90}{ GP }} & yes & yes \citep{mohammadi2019emulating} & Bayesian & no & yes for some covariance kernels & sensitivity R package \citep{iooss2019shapley}\\ \hline
\parbox[t]{2mm}{\rotatebox[origin=r]{90}{ PCE}} & no& no& bootstrap & no & yes \citep{Sudret2008} & no \\ \hline
\parbox[t]{2mm}{\rotatebox[origin=r]{90}{BMARS}} & no& no& Bayesian & yes \citep{francom2018sensitivity} & yes \citep{francom2018sensitivity} & no \\ \hline
\end{tabular}
\caption{Properties of various metamodels under nonparametric regression.}
\label{tbl:metamodels}
\end{table}

\section{Review of posterior contraction}

A posterior contraction rate quantifies how quickly a posterior distribution approaches the true parameter of the data's distribution. 
We use a simplified version of the definition from \cite{ghosal2017fundamentals}:
for every $n \in \mathbb{N}$, let $X^{(n)}$ be an observation in a sample space $(\mathfrak{X}^{(n)}, \mathscr{X}^{(n)})$ with distribution $P_{\theta}^{(n)}$ indexed by $\theta$ belonging to a first countable topological space $\Theta$.
Given a prior $\Pi_n$ on the Borel sets of $\Theta$, let $\Pi_n(\cdot \mid X^{(n)})$ be (a fixed particular version of) the posterior distribution. 
\begin{definition}[Posterior contraction rate]
    A sequence $\{\varepsilon_n\}_{n \in \mathbb{N}}$ is a posterior contraction rate at the parameter $\theta_0$ with respect to the semimetric $d$ if $\Pi_n(\theta\colon d(\theta, \theta_0) \geq M_n \varepsilon_n \mid X^{(n)}) \rightarrow 0$ in $P_{\theta_0}^{(n)}$-probability, for every $M_n \rightarrow \infty$.
\end{definition}
If there exists a constant $M>0$ such that $\Pi_n(\theta\colon d(\theta, \theta_0) \geq M \varepsilon_n \mid X^{(n)}) \rightarrow 0$ in $P_{\theta_0}^{(n)}$-probability, 
then the sequence $\{\varepsilon_n\}_{n \in \mathbb{N}}$ satisfies the definition of posterior contraction rate. 
This will be relevant in interpreting Corollaries \ref{corr:randomdesign} and \ref{corr:fixeddesign} in Section~3.

After reviewing the concept of contraction rates, we state for convenience the conditions made in the theorems of \cite{jeong2023art} that our contraction-rate results rely on.
Because these conditions are not the focus of this paper, we leave discussion of the context behind these conditions to \cite{jeong2023art}.

\section{Preliminaries for proofs}

For any positive integer $m$, denote $[m] \coloneqq \{1, \ldots, m\}$.

From the main text, we copy here the considered regression models with fixed and random design.
The regression model with \textit{fixed} design is
\begin{align}
\label{eq:fixeddesign}
Y_i = f_0(\x_i) + \varepsilon_i, \qquad \varepsilon_i \sim N(0, \sigma^2_0), \qquad i = 1, \ldots, n,
\end{align}
where $\sigma^2_0 < \infty$ and each covariate $\x_i \in [0,1]^p$ is fixed.  
The regression model with \textit{random} design is
\begin{align}
\label{eq:randomdesign}
Y_i = f_0(\X_i) + \varepsilon_i, \qquad \X_i \sim \pi, \qquad \varepsilon_i \sim N(0, \sigma^2_0), \qquad i = 1, \ldots, n,
\end{align}
where $\sigma^2_0 < \infty$, each $\X_i \in [0,1]^p$ is a $p$-dimensional random covariate, and $\pi$ is a probability measure such that $\text{supp}(\pi) \subseteq [0,1]^p$.

\paragraph{Piecewise heterogeneous anisotropic functions}

Next we introduce the conditions of the theorems of \cite{jeong2023art} relevant to our work.
The first set of conditions involves what values of $f_0$ and $\sigma_0^2$ are allowed for BART to contract around $f_0$.
A common assumption for $f_0$ is isotropic smoothness, but this excludes the realistic scenario that $f_0$ is discontinuous and has different degrees of smoothness in different directions and regions. 
\cite{jeong2023art} introduce a new class of \textit{piecewise heterogeneous anisotropic} functions whose domain is partitioned into many boxes (i.e. hyperrectangles), each of which has its own anisotropic smoothness with the same harmonic mean. 
First assume $f_0$ is $d$-sparse, i.e. there exists a function $h_0\colon [0,1]^d \rightarrow \mathbb{R}$ and a subset $S_0 \subseteq [p]$ with $|S_0|=d$ such that $f_0(\x) = h_0(\x_{S_0})$ for any $\x \in [0,1]^p$.
For any given box $\Xi \subseteq [0,1]^d$, smoothness parameter $\bmalpha = (\alpha_1, \ldots, \alpha_d)^T \in (0,1]^d$, and Hölder coefficient $\lambda < \infty$, an \textit{anisotropic $\bmalpha$-Hölder space} on $\Xi$ is defined as 
\begin{equation*}
    \cH_{\lambda}^{\bmalpha,d}(\Xi) \coloneqq \Big\{ h\colon \Xi \rightarrow \mathbb{R}; |h(x)-h(y)| \leq \lambda \sum_{j=1}^d |x_j - y_j|^{\alpha_j}, x,y \in \Xi \Big\}.
\end{equation*}
Though $h_0$ might have different anisotropic smoothness on different boxes, it is important to assume that all boxes have the same harmonic mean. 
Thus define the set $\cA_{\balpha}^{R,d}$ to be the set of $R$-tuples of smoothness parameters that have harmonic mean $\balpha \in (0,1]$:
\begin{equation*}
    \cA_{\balpha}^{R,d} \coloneqq 
    \Big\{ (\bmalpha_1, \ldots, \bmalpha_R): \bmalpha_r \in (0,1]^d, \balpha^{-1} = p^{-1} \sum_{j=1}^d \alpha_{rj}^{-1}, r \in [R] \Big\}.
\end{equation*}
Given a partition $(\Xi_1, \ldots, \Xi_R)$ of $[0,1]^d$ with boxes $\Xi_r \subseteq [0,1]^d$ and a smoothness $R$-tuple $A_{\balpha} \in \cA_{\balpha}^{R,d}$ for some $\balpha \in (0,1]$,
define a \textit{piecewise heterogeneous anisotropic Hölder space} as
\begin{equation*}
    \cH_{\lambda}^{A_{\balpha}, d}(\mathfrak{X})
    \coloneqq \Big \{ h\colon[0,1]^d \rightarrow \mathbb{R}; h|_{\Xi_r} \in \cH_{\lambda}^{\bmalpha_r,d}(\Xi_r), r \in [R] \Big\}.
\end{equation*}
To extend a function from a sparse domain to the original domain $[0,1]^p$, for any nonempty subset $S \subseteq [p]$ define $W_S^p\colon \cC(\bbR^{|S|}) \rightarrow \cC(\bbR^p)$ as the map that extends $h \in \cC(\bbR^{|S|})$ 
to the function $W_S^p h\colon \x \rightarrow h(\x_S)$ where $\x \in [0,1]^p$ and $\cC(E)$ denotes the class of real-valued continuous functions defined on a Euclidean subspace $E$.
With this definition, the space $\cH_{\lambda}^{A_{\balpha}, d}(\mathfrak{X})$ from the preceding panel can be extended to the corresponding \textit{$d$-sparse piecewise heterogeneous anisotropic Hölder space}
\begin{align*}
    \Gamma_{\lambda}^{A_{\balpha}, d, p}(\mathfrak{X}) \coloneqq 
    \bigcup_{S \subseteq [p]\colon \,|S|=d} W_S^p \Big( \cH_{\lambda}^{A_{\balpha}, d}(\mathfrak{X}) \Big).
\end{align*}

With these definitions, we can now state the needed assumptions on the true $f_0$ and $\sigma^2$.
\begin{itemize}
    \item [(A1)] For $d>0$, $\lambda>0$, $R>0$, $\mathfrak{X} = (\Xi_1, \ldots, \Xi_R)$, and $A_{\balpha} \in \cA_{\balpha}^{R,d}$ with $\balpha \in (0,1]$, the true function satisfies $f_0 \in \Gamma_{\lambda}^{A_{\balpha}, d, p}(\mathfrak{X})$ or $f_0 \in \Gamma_{\lambda}^{A_{\balpha}, d, p}(\mathfrak{X}) \cap \mathcal{C}([0,1]^p)$.
    \item [(A2)] It is assumed that $d, p, \lambda, R$, and $\balpha$ satisfy $\epsilon_n \ll 1$, where 
    \begin{align} \label{eq:rate}
        \epsilon_n \coloneqq \sqrt{\frac{d \log p}{n}} + (\lambda d)^{d/(2\balpha+d)} \Big( \frac{R \log n}{n} \Big)^{\balpha/(2\balpha+d)}.
    \end{align}
    \item [(A3)] The true function $f_0$ satisfies $\norm{f_0}_{\infty} \lesssim \sqrt{\log n}$.
    \item [(A4)] The true variance parameter satisfies $\sigma^2 \in [C_0^{-1}, C_0]$ for some sufficiently large $C_0 > 1$.
\end{itemize}

\paragraph{Split-net}

The second set of conditions (of the theorems of \cite{jeong2023art} relevant to our work) involves the split values $c$ allowed in the binary split rules ``$x_j < c$'' of the regression trees.
If a partition of $[0,1]^p$ can be created using the aforementioned tree-based procedure, call it a \textit{flexible tree partition}.
To restrict a flexible tree partition by a set of allowable split values in the binary split rules, for any integer $b_n$ define a \textit{split-net} $\cZ$ to be a finite set of points in $[0,1]^p$ at which possible splits occur along coordinates.
That is, the allowable split values for any input dimension $j \in [p]$ are the $j$th components of the points in the split-net. 
For a given split-net $\cZ$, a flexible tree partition $(\Omega_1, \ldots, \Omega_K)$ of $[0,1]^p$ with boxes $\Omega_k \subseteq [0,1]^p$, $k \in [K]$, is called a \textit{$\cZ$-tree partition} if every split occurs at points in $\cZ$.

A split net should be dense enough for a resulting partition to be close enough to the underlying partition $\mathfrak{X}^* = (\Xi_1^*, \ldots, \Xi_R^*)$ of the true function $f_0$.
For any two box partitions $\mfY^1 = (\Psi^1_1, \ldots, \Psi^1_J)$ and $\mfY^2 = (\Psi^2_1, \ldots, \Psi^2_J)$ with the same number $J$ of boxes, their closeness will be measured using the Hausdorff-type divergence 
\begin{align*}
    \Upsilon(\mfY^1, \mfY^2) \coloneqq \min_{\perm \in \texttt{Perm}[J]} \max_{r \in [J]} \text{Haus}(\Psi_r^1, \Psi^2_{\perm(r)})
\end{align*}
where $\texttt{Perm}[J]$ denotes the set of all permutations of $[J]$ and $\text{Haus}(\cdot,\cdot)$ is the Hausdorff distance. 
For a subset $S \subseteq [p]$, a box partition of $[0,1]^p$ is called \textit{$S$-chopped} if every box $\Psi$ in the box partition satisfies $\max_{j \in S} \texttt{len}([\Psi]_j) < 1$ 
and $\min_{j \notin S} \texttt{len}([\Psi]_j) = 1$, where $[\Psi]_j$ denotes the interval created by projecting the box $[\Psi]$ onto the $j$-th principal axis. 
For a given subset $S \subseteq [p]$, consider an $S$-chopped partition $\mfY$ of $[0,1]^p$ with $J$ boxes. 
For any given $c_n \geq 0$, a split-net $\cZ_n$ is said to be \textit{$(\mfY, c_n)$-dense} if there exists an $S$-chopped $\cZ_n$-tree partition $\T_n$ of $[0,1]^p$ with $J$ boxes such that $\Upsilon(\mfY, \T_n) \leq c_n$.

A split net should also be regular enough (defined below) for a tree partition to capture local features of $f_0$ on each box. 
Assume the underlying partition $\mathfrak{X}^*$ can be approximated well by an $S(\mathfrak{X}^*)$-chopped $\cZ$-tree partition $(\Omega_1^*, \ldots, \Omega_R^*) \coloneqq \arg \min_{\T \in \mathscr{T}_{S(\mathfrak{X}^*),R,\cZ}} \Upsilon(\mathfrak{X}^*,\T)$. 
In each box $\Omega_r^*$, the idea is to allow splits to occur more often along the input dimensions with less smoothness. 
Given a split-net $\cZ$ and splitting coordinate $j$, define the \textit{midpoint-split} of a box $\Psi$ as the bisection of $\Psi$ along coordinate $j$ at the $\lceil \tilde{b}_j(\cZ,\Psi)/2 \rceil$th split-candidate in $[\cZ]_j \cap \texttt{int}([\Psi]_j)$, where $\tilde{b}_j(\cZ,\Psi)$ is the cardinality of $[\cZ]_j \cap \texttt{int}([\Psi]_j)$.
Given a smoothness vector $\bmalpha \in (0,1]^d$, box $\Psi \subseteq [0,1]^p$, split-net $\cZ$, integer $L>0$, and index set $S = \{s_1, \ldots, s_d\} \subseteq [p]$, 
define the \textit{anisotropic k-d tree} $AKD(\Psi; \cZ,\bmalpha,L,S)$ as the iterative splitting procedure that partitions $\Psi$ into disjoint boxes $\Omega_1^{\circ}, \ldots, \Omega_{2^{L^{\circ}}}^{\circ}$ as follows:
\begin{enumerate}
    \item Set $\Omega_1^{\circ} = \Psi$ and set counter $l_j = 0$ for each $j \in [d]$.
    \item Let $L^{\circ} = \sum_{j=1}^d l_j$ for the current counters. For splits at iteration $1 + L^{\circ}$, choose $j' = \min \{ \arg \min_j l_j \alpha_j\}$. Midpoint-split all boxes $\Omega_1^{\circ}, \ldots, \Omega_{2^{L^{\circ}}}^{\circ}$ with the given $\cZ$ and splitting coordinate $s_{j'}$. Relabel the generated new boxes as $\Omega_1^{\circ}, \ldots, \Omega_{2^{1+L^{\circ}}}^{\circ}$, and then increment $l_{j'}$ by one.
    \item Repeat step 2 until either the updated $L^{\circ}$ equals $L$ or the midpoint-split is no longer available. Return counters $l_1,\ldots,l_d$ and boxes $\Omega_1^{\circ}, \ldots, \Omega_{2^{L^{\circ}}}^{\circ}$.
\end{enumerate}
For a given box $\Psi \subseteq [0,1]^p$, smoothness vector $\bmalpha \in (0,1]^d$, integer $L>0$, and index set $S = \{s_1, \ldots, s_d\} \subseteq [p]$, a split-net $\cZ$ is called \textit{$(\Psi,\bmalpha,L,S)$-regular} if the counters and boxes returned by $AKD(\Psi; \cZ,\bmalpha,L,S)$ satisfy $L^{\circ}=L$ and $\max_k \texttt{len}([\Omega_k^{\circ}]_{s_j}) \lesssim \texttt{len} ([\Psi]_{s_j}) 2^{-l_j}$ for every $j \in [d]$.

With these definitions, we can now state the needed assumptions on the sequence $\{\cZ_n\}_{n=1}^{\infty}$ of split-nets.
\begin{itemize}
    \item [(A5)] Each split-net $\cZ_n$ satisfies $\max_{1\leq j\leq p} \log b_j(\cZ_n) \lesssim \log n$, where $b_j(\cZ_n)$ is the cardinality of the set $\{ z_j: (z_1, \ldots, z_p) \in \cZ_n\}$.
    \item [(A6)] Each split-net $\cZ_n$ is suitably dense and regular to construct a $\cZ_n$-tree partition $\hat{\T}$ such that there exists a simple function $\hat{f}_0 \in \mathcal{F}_{\hat{\T}}$ satisfying $\|f_0 - \hat{f}_0 \|_n \lesssim \bar{\epsilon}_n$, where 
    \begin{equation}  \label{eq:barepsilon}
        \bar{\epsilon}_n \coloneqq (\lambda d)^{d/(2\balpha+d)} \{(R \log n)/n\}^{\balpha/(2\balpha+d)},
    \end{equation}
    the empirical $L_2$-norm $\|\cdot\|_n$ is defined as $\|f\|_n^2 = n^{-1} \sum_{i=1}^n |f(\x_i)|^2$, and $\mathcal{F}_{\hat{\T}}$ is the set of functions on $[0,1]^p$ that are constant on each piece of the partition $\hat{\T}$. 
    \item [(A7)] Each $\cZ_n$-tree partition $(\Omega_1^*, \ldots, \Omega_R^*)$ approximating the underlying partition $\mathfrak{X}^*$ for the true function $f_0$ satisfies $\max_{r \in [R]} \texttt{depth}(\Omega_r^*) \lesssim \log n$, where $\texttt{depth}$ means the depth of a node (i.e. number of nodes in the path from that node to the root node).
\end{itemize}

Finally, we state the required prior specification.
\begin{itemize}
    \item [(P1)] Each tree partition in the ensemble is independently assigned a tree prior with Dirichlet sparsity from \cite{Linero18}. This sparse Dirichlet prior places a Dirichlet prior on the proportion vector used to select the splitting coordinate $j$ during the creation of a split rule. 
    \item [(P2)] The step-heights of the regression-tree functions are each assigned a normal prior with mean zero and covariance matrix whose eigenvalues are bounded below and above.
    \item [(P3)] The variance parameter $\sigma^2$ is assigned an inverse gamma prior.
\end{itemize}

\cite{jeong2023art} make the above assumptions and prior specification for their contraction-rate results in the fixed design setting \eqref{eq:fixeddesign}.
For their contraction-rate results in the random design setting \eqref{eq:randomdesign}, a few of the above assumptions and prior specifications are replaced by the following:
\begin{itemize}
    \item [(A3$\ast$)] The true function $f_0$ satisfies $\norm{f_0}_{\infty} \leq C_0^*$ for some sufficiently large $C_0^* > 0$.
    \item [(A6$\ast$)] The split-net $\cZ$ is suitably dense and regular to construct a $\cZ$-tree partition $\hat{\T}$ such that there exists $\hat{f}_0 \in \mathcal{F}_{\hat{\T}}$ satisfying $\|f_0 - \hat{f}_0\| \lesssim \bar{\epsilon}_n$ where $\bar{\epsilon}_n$ is given by \eqref{eq:barepsilon}.
    \item [(P2$\ast$)] A prior on the compact support $[-\bar{C}_1, \bar{C}_1]$ is assigned to the step-heights of the regression-tree functions for some $\bar{C}_1 > C_0^*$.
    \item [(P3$\ast$)] A prior on the compact support $[\bar{C}_2^{-1}, \bar{C}_2]$ is assigned to the variance parameter $\sigma^2$ for some $\bar{C}_2 > C_0$.
\end{itemize}

\section{Posterior asymptotics}
\label{sec:mainresults}

This section establishes our contraction-rate results (Corollaries \ref{corr:randomdesign} and \ref{corr:fixeddesign}) for estimators of Sobol\'{} indices and Shapley effects under either the fixed design \eqref{eq:fixeddesign} or the random design \eqref{eq:randomdesign}.
Our proofs rely on these sensitivity indices having a property (defined in Lemma~\ref{lemma:two} below) similar to but slightly less restrictive than Lipschitz continuity.
However, the tasks of proving this property for all of these sensitivity indices are very similar to each other.
Because these indices are linear combinations of the functional $c_{P, \pi}$ defined in \eqref{eq:c}, we can use Lemma~\ref{lemma:two} to reduce the above tasks to the single task of proving this property for $c_{P, \pi}$.

\begin{lemma}
\label{lemma:two}
Suppose the following relationship is true for all indices $k$ in a finite set $\cA$:
given two metric spaces $X$ and $X_0$ with the same metric $d_X$,
there exists a constant $C>0$ such that, for all $(x, x_0) \in X \times X_0$,
the function $\phi_k\colon X \cup X_0 \rightarrow \R$ satisfies
\begin{align*}
|\phi_{k}(x) - \phi_{k}(x_0)| \leq C d_X(x, x_0).
\end{align*}
Then any set $\{a_k\}_{k \in \cA}$ of real numbers satisfies
\begin{align*}
\Bigl\lvert \sum_{k \in \cA} a_k \phi_k(x) - \sum_{k \in \cA} a_k \phi_k(x_0) \Bigr\rvert
&\leq C^* d_X(x, x_0).
\end{align*}
where $C^* = C \sum_{k \in \cA_+} |a_k|$ and $\cA_+ \coloneqq \{k \in \cA \colon \bigl\lvert \phi_k(x) - \phi_k(x_0) \bigr\rvert > 0\}$. 
\end{lemma}

\subsection{Nonparametric regression with random design}

This section assumes the random-design regression setting \eqref{eq:randomdesign}; 
all expectations in this section are with respect to the probability measure $\pi$ in \eqref{eq:randomdesign}.

\begin{theorem}
\label{thm:randomdesign}
Assume (A3*).
If $f \in L^2([0,1]^p)$ shares the same bound $C_0^*$ from (A3*), then for any subset $P \subseteq [p]$ and distribution $\pi$ with support $[0,1]^p$ we have
\begin{align*}
|c_{P, \pi}(f) - c_{P, \pi}(f_0)| 
&\leq 4C_0^* \norm{f - f_0}_{2,\pi}
\end{align*}
for the functional $c_{P, \pi}$ defined in \eqref{eq:c}.
\end{theorem}

\begin{corollary}
\label{corr:randomdesign}
Under the assumptions of Theorem 4 of \cite{jeong2023art} -- Assumptions (A1), (A2), (A3$\ast$), (A4), (A5), (A6$\ast$), and (A7), and the prior assigned through (P1), (P2$\ast$), and (P3$\ast$) -- and Theorem~\ref{thm:randomdesign} above, there exist positive constants $L_{V,\pi,|P|}$, $L_{T,\pi}$, and $L_S$ such that 
as $n \rightarrow \infty$ for $\epsilon_n$ in \eqref{eq:rate}, 
\begin{align*}
\E_0 \Pi \Big\{(f, \sigma^2)\colon |V_{P, \pi}(f) - V_{P, \pi}(f_0)| + |\sigma^2 - \sigma^2_0| > L_{V,\pi,|P|} \epsilon_n  \Big| Y_1, \ldots, Y_n \Big\} &\rightarrow 0, \\
\E_0 \Pi \Big\{(f, \sigma^2)\colon |T_{j, \pi}(f) - T_{j, \pi}(f_0)| + |\sigma^2 - \sigma^2_0| > L_{T,\pi} \epsilon_n  \Big| Y_1, \ldots, Y_n \Big\} &\rightarrow 0, \\ 
\text{ and } \E_0 \Pi \Big\{(f, \sigma^2)\colon |S_{j, \pi}(f) - S_{j, \pi}(f_0)| + |\sigma^2 - \sigma^2_0| > L_S \epsilon_n  \Big| Y_1, \ldots, Y_n \Big\} &\rightarrow 0.
\end{align*}
\end{corollary}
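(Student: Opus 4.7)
The plan is to chain three ingredients: (i) Theorem~4 of \cite{jeong2023art} supplies a constant $M>0$ with
\[
\E_0 \Pi\{(f, \sigma^2)\colon \norm{f - f_0}_{2,\pi} + |\sigma^2 - \sigma_0^2| > M \epsilon_n \mid Y_1, \ldots, Y_n\} \rightarrow 0;
\]
(ii) Theorem~\ref{thm:randomdesign} supplies the Lipschitz-type bound $|c_{Q,\pi}(f) - c_{Q,\pi}(f_0)| \leq 4 C_0^* \norm{f - f_0}_{2,\pi}$ for every $Q \subseteq [p]$; (iii) Lemma~\ref{lemma:two} promotes such a bound from $c_{Q,\pi}$ to any fixed linear combination of the $c_{Q,\pi}$. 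Combined, posterior contraction of $f$ in $L^2(\pi)$ will force posterior contraction of every sensitivity index that can be written as a finite linear combination of the $c_{Q,\pi}$.

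The key algebraic step is to verify this last point for each index. Möbius inversion of the defining recursion yields $V_{P,\pi}(f) = \sum_{Q \subseteq P} (-1)^{|P|-|Q|} c_{Q,\pi}(f)$; summing this expression over $P \subseteq [p]\setminus\{j\}$ and reordering gives an analogous representation for $T_{j,\pi}(f)$; and $S_{j,\pi}(f)$ is already written in \eqref{eq:shapley} as such a combination with coefficients $(p!)^{-1}(p-|P|-1)!\,|P|!$. With these representations in hand, I apply Lemma~\ref{lemma:two}, taking the finite index set $\cA$ to be the collection of subsets appearing in each expansion, $\phi_Q = c_{Q,\pi}$, and $X$ a bounded-sup-norm subset of $L^2$ under $d_X(f,f')=\norm{f-f'}_{2,\pi}$. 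This produces explicit constants $L_{V,\pi,|P|},\,L_{T,\pi},\,L_S$ for which $|I_\pi(f)-I_\pi(f_0)| \leq L\,\norm{f-f_0}_{2,\pi}$ whenever $I_\pi$ is the corresponding index. Feeding this bound into (i) and absorbing the $|\sigma^2-\sigma_0^2|$ term via the triangle inequality yields each of the three displayed limits at rate proportional to $\epsilon_n$.

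The main obstacle is bookkeeping around the sup-norm of $f$: Theorem~\ref{thm:randomdesign} as stated requires $\norm{f}_\infty \leq C_0^*$, while the BART posterior places mass on the larger class $\{\norm{f}_\infty \leq T \bar{C}_1\}$ via (P2$\ast$), where $T$ is the number of trees. I would either restate Theorem~\ref{thm:randomdesign} with $C_0^*$ replaced by $\max(C_0^*, T\bar{C}_1)$ — the proof should go through unchanged since only sup-norm bounds on $f$ and $f_0$ enter its argument — or restrict the posterior to the subset $\{\norm{f}_\infty \leq C_0^*\}$ and argue complementary mass is asymptotically negligible. Once this technicality is resolved, the constants $L_{V,\pi,|P|}, L_{T,\pi}, L_S$ are fully explicit, depending only on $p, |P|, j$, the enlarged sup-norm bound, and $\pi$, and the three contraction statements follow immediately.
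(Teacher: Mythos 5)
Your proposal is correct and follows essentially the same route as the paper: bound each index difference by a constant times $\norm{f-f_0}_{2,\pi}$ via Theorem~\ref{thm:randomdesign} and Lemma~\ref{lemma:two} (using that every index is a fixed linear combination of the $c_{Q,\pi}$), then transfer the contraction supplied by Theorem~4 of \cite{jeong2023art} to the index, absorbing the $|\sigma^2-\sigma_0^2|$ term because the Lipschitz constant is at least $1$. Your observation about the mismatch between the sup-norm bound $C_0^*$ assumed in Theorem~\ref{thm:randomdesign} and the posterior's support $\norm{f}_\infty \leq T\bar{C}_1$ under (P2$\ast$) is a fair technical point that the paper's proof does not address explicitly, and your proposed fix of enlarging the constant is the natural resolution.
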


\subsection{Nonparametric regression with fixed design}

This section assumes the fixed-design regression setting \eqref{eq:fixeddesign}; 
all expectations in this section are with respect to the probability measure $P_{\sX}(\cdot) = n^{-1} \sum_{\x \in \sX} \delta_{\x} (\cdot)$ where 
$\sX$ is the set of the fixed covariates assumed in \eqref{eq:fixeddesign}.

\begin{theorem}
\label{thm:fixeddesign}
Assume (A3).
If $f \in L^2([0,1]^p)$ shares the same bound $\sqrt{\log n}$ from (A3), then for any subset $P \subseteq [p]$ and distribution $\pi$ with support $[0,1]^p$ we have
\begin{align*}
|c_{P, P_{\sX}}(f) - c_{P, P_{\sX}}(f_0)| 
&\lesssim 4 \sqrt{\log n} \norm{f - f_0}_{2,P_{\sX}}
\end{align*}
where the empirical $L_2$-norm $\norm{\cdot}_{2,P_{\sX}}$ is defined as $\norm{f}_{2,P_{\sX}}^2 = n^{-1} \sum_{\x \in \sX} |f(\x)|^2$.
\end{theorem}

\begin{corollary}
\label{corr:fixeddesign}
Under the assumptions of Theorem 2 of \cite{jeong2023art} -- Assumptions (A1), (A2), (A3), (A4), (A5), (A6), and (A7), and the prior assigned through (P1), (P2), and (P3) -- and Theorem \ref{thm:fixeddesign} above, there exist positive constants $L_{V,\pi,|P|}$, $L_{T,\pi}$, and $L_S$ such that
as $n \rightarrow \infty$ for $\epsilon_n$ in \eqref{eq:rate},
\begin{align*}
\E_0 \Pi \Big\{(f, \sigma^2)\colon |V_{P, \pi}(f) - V_{P, \pi}(f_0)| + |\sigma^2 - \sigma^2_0| > L_{V,\pi,|P|} \epsilon_n \sqrt{\log n} \mid Y_1, \ldots, Y_n \Big\} &\rightarrow 0, \\
\E_0 \Pi \Big\{(f, \sigma^2)\colon |T_{j, \pi}(f) - T_{j, \pi}(f_0)| + |\sigma^2 - \sigma^2_0| > L_{T,\pi} \epsilon_n \sqrt{\log n} \mid Y_1, \ldots, Y_n \Big\} &\rightarrow 0, \\ 
\text{ and } \E_0 \Pi \Big\{(f, \sigma^2)\colon |S_{j, \pi}(f) - S_{j, \pi}(f_0)| + |\sigma^2 - \sigma^2_0| > L_S \epsilon_n \sqrt{\log n} \mid Y_1, \ldots, Y_n \Big\} &\rightarrow 0.
\end{align*}
\end{corollary}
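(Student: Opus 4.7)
The plan is to chain together three ingredients: the empirical-$L_2$ posterior contraction of $f$ established in Theorem 2 of \cite{jeong2023art}, the Lipschitz-type bound of Theorem~\ref{thm:fixeddesign} for the base functional $c_{P, P_{\sX}}$, and Lemma~\ref{lemma:two}'s transfer of such a bound to finite linear combinations. First, Theorem 2 of \cite{jeong2023art} yields a constant $M$ such that $\E_0 \Pi\{(f,\sigma^2)\colon \|f-f_0\|_n + |\sigma^2 - \sigma^2_0| > M\epsilon_n \mid Y_1, \ldots, Y_n\} \to 0$. I would additionally argue that the same result restricts to a sieve on which $\|f\|_\infty \lesssim \sqrt{\log n}$, so that together with $\|f_0\|_\infty \lesssim \sqrt{\log n}$ from (A3) the hypothesis of Theorem~\ref{thm:fixeddesign} is verified on a set of high posterior probability.

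On that event, Theorem~\ref{thm:fixeddesign} directly gives $|c_{Q, P_{\sX}}(f) - c_{Q, P_{\sX}}(f_0)| \lesssim \sqrt{\log n}\,\|f-f_0\|_n \lesssim \sqrt{\log n}\,\epsilon_n$ for every $Q \subseteq [p]$. Interpreting $\pi$ in the corollary as $P_{\sX}$, I then unfold each sensitivity index as a finite signed sum of such $c_{Q, P_{\sX}}$ functionals: recursively applying the defining identity $V_{P, P_{\sX}}(f) = c_{P, P_{\sX}}(f) - \sum_{Q \subset P} V_{Q, P_{\sX}}(f)$ expresses $V_{P, P_{\sX}}$ as a signed sum over $Q \subseteq P$ (Möbius inversion); the total effect $T_{j, P_{\sX}}$ is a further signed sum of such $V_{Q, P_{\sX}}$'s; and the Shapley effect $S_{j, P_{\sX}}$ is the explicit $p!$-weighted sum displayed in \eqref{eq:shapley}. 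Applying Lemma~\ref{lemma:two} with $\phi_Q = c_{Q, P_{\sX}}$, with $\cA$ the relevant collection of subsets, and with $\{a_Q\}$ the induced coefficients, I obtain $|V_{P, P_{\sX}}(f) - V_{P, P_{\sX}}(f_0)|$, $|T_{j, P_{\sX}}(f) - T_{j, P_{\sX}}(f_0)|$, and $|S_{j, P_{\sX}}(f) - S_{j, P_{\sX}}(f_0)|$ all bounded by $\sqrt{\log n}\,\epsilon_n$ up to hidden constants depending on $|P|$ and $p$ but not on $f$ or $n$. Combining with $|\sigma^2 - \sigma^2_0| \leq M \epsilon_n \leq M \sqrt{\log n}\,\epsilon_n$ on the same event and choosing $L_{V,\pi,|P|}$, $L_{T,\pi}$, and $L_S$ large enough to absorb the accumulated constants closes each of the three displayed posterior probabilities.

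The main obstacle will be Step~1's sup-norm control: the Gaussian step-height prior (P2) does not a priori enforce $\|f\|_\infty \lesssim \sqrt{\log n}$, unlike its compactly supported random-design counterpart (P2$\ast$) which made Corollary~\ref{corr:randomdesign} immediate. Extracting this bound appears to require revisiting the sieve used inside \cite{jeong2023art}'s contraction proof, where forests with aberrantly large fitted values already receive negligible posterior mass; if a direct extraction is unavailable, a short prior-probability calculation on $\{\|f\|_\infty > C\sqrt{\log n}\}$ using Gaussian tail bounds across the (at most polynomially many) leaves of a tree ensemble should do the job, at the cost of possibly absorbing an additional $\sqrt{\log n}$ factor that is already permitted by the stated rate. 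Everything after that step is essentially the mechanical bookkeeping used for the random-design Corollary~\ref{corr:randomdesign}, now amplified by the $\sqrt{\log n}$ coming from the sup-norm growth of $f_0$ allowed in (A3).
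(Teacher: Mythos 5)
Your argument is essentially the paper's own proof: the paper handles the fixed-design corollary by taking the random-design proof verbatim (Lemma~\ref{lemma:two} plus the Lipschitz-type bound of the relevant theorem to get $|T_{j,\pi}(f)-T_{j,\pi}(f_0)|\le D\,\|f-f_0\|_n$, then the inclusion $A_\delta\subseteq B_\delta$ for $D\ge 1$ and an appeal to the contraction theorem of \cite{jeong2023art}) and simply replacing $C_0^*$ with $\sqrt{\log n}$ and $\pi$ with $P_{\sX}$, exactly as you do. The one place you go beyond the paper --- asking whether posterior draws $f$ actually satisfy the sup-norm hypothesis $\norm{f}_\infty\lesssim\sqrt{\log n}$ of Theorem~\ref{thm:fixeddesign} under the Gaussian step-height prior (P2) --- flags a condition the paper assumes silently rather than verifies, and your proposed resolution via the sieve inside the contraction proof of \cite{jeong2023art} is the natural way to close that gap.
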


\section{Proofs of results in main text}

For convenience, here we replicate the theorems, lemmas, and relevant equations from the main text.

\subsection{Relevant quantities}

\begin{equation}
\label{eq:c}
c_{P,\pi}(f) = \V_{\pi} [\E_{\pi} \{f(\X) \mid \X_P\}] = \E_{\pi} ([\E_{\pi} \{f(\X) \mid \X_P\}]^2) - [\E_{\pi} \{f(\X)\}]^2.
\end{equation}

\begin{align} \label{eq:rate}
    \epsilon_n \coloneqq \sqrt{\frac{d \log p}{n}} + (\lambda d)^{d/(2\balpha+d)} \Big( \frac{R \log n}{n} \Big)^{\balpha/(2\balpha+d)}.
\end{align}

\begin{equation} \label{eq:shapley}
    S_{j, \pi}(f) = (p!)^{-1} \sum_{P \subseteq ([p] \setminus \{j\})} (p - |P| - 1)! \, |P|! \, \big\{c_{P \cup \{j\},\pi}(f) - c_{P,\pi}(f)\big\}.
\end{equation}

\subsection{Theorems and lemmas}

\begin{lemma}
\label{lemma:two}
Suppose the following relationship is true for all indices $k$ in a finite set $\cA$:
given two metric spaces $X$ and $X_0$ with the same metric $d_X$,
there exists a constant $C>0$ such that, for all $(x, x_0) \in X \times X_0$,
the function $\phi_k\colon X \cup X_0 \rightarrow \R$ satisfies
\begin{align*}
|\phi_{k}(x) - \phi_{k}(x_0)| \leq C d_X(x, x_0).
\end{align*}
Then any set $\{a_k\}_{k \in \cA}$ of real numbers satisfies
\begin{align*}
\Bigl\lvert \sum_{k \in \cA} a_k \phi_k(x) - \sum_{k \in \cA} a_k \phi_k(x_0) \Bigr\rvert
&\leq C^* d_X(x, x_0).
\end{align*}
where $C^* = C \sum_{k \in \cA_+} |a_k|$ and $\cA_+ \coloneqq \{k \in \cA \colon \bigl\lvert \phi_k(x) - \phi_k(x_0) \bigr\rvert > 0\}$. 
\end{lemma}

\begin{theorem}
\label{thm:randomdesign}
Assume (A3*).
If $f \in L^2([0,1]^p)$ shares the same bound $C_0^*$ from (A3*), then for any subset $P \subseteq [p]$ and distribution $\pi$ with support $[0,1]^p$ we have
\begin{align*}
|c_{P, \pi}(f) - c_{P, \pi}(f_0)| 
&\leq 4C_0^* \norm{f - f_0}_{2,\pi}
\end{align*}
for the functional $c_{P, \pi}$ defined in \eqref{eq:c}.
\end{theorem}

\begin{corollary}
\label{corr:randomdesign}
Under the assumptions of Theorem 4 of \cite{jeong2023art} -- Assumptions (A1), (A2), (A3$\ast$), (A4), (A5), (A6$\ast$), and (A7), and the prior assigned through (P1), (P2$\ast$), and (P3$\ast$) -- and Theorem~\ref{thm:randomdesign} above, there exist positive constants $L_{V,\pi,|P|}$, $L_{T,\pi}$, and $L_S$ such that 
as $n \rightarrow \infty$ for $\epsilon_n$ in \eqref{eq:rate}, 
\begin{align*}
\E_0 \Pi \Big\{(f, \sigma^2)\colon |V_{P, \pi}(f) - V_{P, \pi}(f_0)| + |\sigma^2 - \sigma^2_0| > L_{V,\pi,|P|} \epsilon_n  \Big| Y_1, \ldots, Y_n \Big\} &, \\
\E_0 \Pi \Big\{(f, \sigma^2)\colon |T_{j, \pi}(f) - T_{j, \pi}(f_0)| + |\sigma^2 - \sigma^2_0| > L_{T,\pi} \epsilon_n  \Big| Y_1, \ldots, Y_n \Big\} &, \\ 
\text{ and } \E_0 \Pi \Big\{(f, \sigma^2)\colon |S_{j, \pi}(f) - S_{j, \pi}(f_0)| + |\sigma^2 - \sigma^2_0| > L_S \epsilon_n  \Big| Y_1, \ldots, Y_n \Big\} &
\end{align*}
each shrink to zero.
\end{corollary}

\begin{theorem}
\label{thm:fixeddesign}
Assume (A3).
If $f \in L^2([0,1]^p)$ shares the same bound $\sqrt{\log n}$ from (A3), then for any subset $P \subseteq [p]$ and distribution $\pi$ with support $[0,1]^p$ we have
\begin{align*}
|c_{P, P_{\sX}}(f) - c_{P, P_{\sX}}(f_0)| 
&\lesssim 4 \sqrt{\log n} \norm{f - f_0}_{2,P_{\sX}}
\end{align*}
where the empirical $L_2$-norm $\norm{\cdot}_{2,P_{\sX}}$ is defined as $\norm{f}_{2,P_{\sX}}^2 = n^{-1} \sum_{\x \in \sX} |f(\x)|^2$.
\end{theorem}

\begin{corollary}
\label{corr:fixeddesign}
Under the assumptions of Theorem 2 of \cite{jeong2023art} -- Assumptions (A1), (A2), (A3), (A4), (A5), (A6), and (A7), and the prior assigned through (P1), (P2), and (P3) -- and Theorem \ref{thm:fixeddesign} above, there exist positive constants $L_{V,\pi,|P|}$, $L_{T,\pi}$, and $L_S$ such that
as $n \rightarrow \infty$ for $\epsilon_n$ in \eqref{eq:rate},
\begin{align*}
\E_0 \Pi \Big\{(f, \sigma^2)\colon |V_{P, \pi}(f) - V_{P, \pi}(f_0)| + |\sigma^2 - \sigma^2_0| > L_{V,\pi,|P|} \epsilon_n \sqrt{\log n} \mid Y_1, \ldots, Y_n \Big\} &, \\
\E_0 \Pi \Big\{(f, \sigma^2)\colon |T_{j, \pi}(f) - T_{j, \pi}(f_0)| + |\sigma^2 - \sigma^2_0| > L_{T,\pi} \epsilon_n \sqrt{\log n} \mid Y_1, \ldots, Y_n \Big\} &, \\ 
\text{ and } \E_0 \Pi \Big\{(f, \sigma^2)\colon |S_{j, \pi}(f) - S_{j, \pi}(f_0)| + |\sigma^2 - \sigma^2_0| > L_S \epsilon_n \sqrt{\log n} \mid Y_1, \ldots, Y_n \Big\} &
\end{align*}
each shrink to zero.
\end{corollary}

\subsection{Proofs}

\begin{proof}[Proof of Lemma~\ref{lemma:two}]
We have
\begin{equation*}
\Bigl\lvert \sum_{k \in \cA} a_k \phi_k(x) - \sum_{k \in \cA} a_k \phi_k(x_0) \Bigr\rvert \leq \sum_{k \in \cA} |a_k| \, \bigl\lvert \phi_k(x) - \phi_k(x_0) \bigr\rvert 
\leq \sum_{k \in \cA_+} |a_k| \, C d_X(x, x_0)
\end{equation*}
where the right-most sum in the preceding panel is exactly $C^* d_X(x, x_0)$.
\end{proof}

\begin{proof}[Proof of Theorem~\ref{thm:randomdesign}]
Note that 
\begin{align*}
\Bigl\lvert c_{P, \pi}(f) - c_{P, \pi}(f_0) \Bigr\rvert
&= \Bigl\lvert\Big( \E \big[ (\E [f(\X) \mid \X_P])^2 - (\E [f_0(\X) \mid \X_P])^2 \big] \Big) \\
&\qquad- \Big( [\E f(\X)]^2 - [\E f_0(\X)]^2 \Big) \Bigr\rvert \\
&\leq \E \Bigl\lvert (\E [f(\X) \mid \X_P])^2 - (\E [f_0(\X) \mid \X_P])^2 \Bigr\rvert \\
&\qquad+ \Bigl\lvert  [\E \{f(\X)\}]^2 - [\E \{f_0(\X)\}]^2  \Bigr\rvert. 
\end{align*}
From the assumption that $f$ and $f_0$ are bounded in supremum norm by $C_0^*$, we get
\begin{align*}
&\Bigl\lvert[\E \{f(\X)\}]^2 - [\E \{f_0(\X)\}]^2\Bigr\rvert \\
&\quad= \Bigl\lvert \big[\E \{f(\X)\} + \E\{ f_0(\X)\}\big]\big[\E\{ f(\X)\} - \E\{ f_0(\X)\}\big] \Bigr\rvert \\
&\quad\leq 2 C_0^* \E |f(\X) - f_0(\X)|.
\end{align*}
We can similarly deduce for any $\X_P$ that 
\begin{align*}
\Bigl\lvert (\E [f(\X) \mid \X_P])^2 - (\E [f_0(\X) \mid \X_P])^2 \Bigr\rvert
&\leq 2 C_0^* \; \E [|f(\X) - f_0(\X)| \mid \X_P].
\end{align*}
Then
\begin{align*}
&|c_{P, \pi}(f) - c_{P, \pi}(f_0)| \\
&\leq \E \big[ 2C_0^* \E[|f(\X) - f_0(\X)| \mid \X_P] \big] + 2C_0^* \E |f(\X) - f_0(\X)| \\
&= 4 C_0^* \E |f(\X) - f_0(\X)|.
\end{align*}
To finish, Jensen's inequality implies $ \E |f(\X) - f_0(\X)| \leq \norm{f-f_0}_{2,\pi}$.
\end{proof}

\begin{proof}[Proof of Corollary~\ref{corr:randomdesign}]
Below is the proof just for the $j$th (where $j \in [p]$) total-effect Sobol\'{} index.
The same argument can be followed to obtain the corresponding results for any main-effect Sobol\'{} index and any Shapley effect after making the appropriate substitutions for the $a_P$ below.
Lemma~\ref{lemma:two} and Theorem~\ref{thm:randomdesign} together imply 
\begin{align*}
|T_{j, \pi}(f) - T_{j, \pi}(f_0)| 
&\leq D_{T,\pi} \norm{f - f_0}_{2,\pi}.
\end{align*}
where $D_{T,\pi} \leq \max\{1, 4C_0^* \sum_{P \in [p]} |a_{P,\pi}|\}$ and the real values $a_P$ are the coefficients corresponding to $T_{j, \pi}$ expressed as a linear combination of $c_{P, \pi}$.
(Theorem~\ref{thm:sum} provides upper bounds for the sum $\sum_{P \in [p]} |a_{P,\pi}|$.)
For any constant $\delta > 0$, define the two sets
\begin{align*}
    A_{\delta} &\coloneqq \{ (f, \sigma^2) \colon \lvert T_{j, \pi}(f) - T_{j, \pi}(f_0) \rvert + |\sigma^2 - \sigma^2_0| > \delta \} \\
    B_{\delta} &\coloneqq \{ (f, \sigma^2) \colon D_{T,\pi} \lvert T_{j, \pi}(f) - T_{j, \pi}(f_0) \rvert + D_{T,\pi} |\sigma^2 - \sigma^2_0| > \delta \}.
\end{align*}
Because $D_{T,\pi} \geq 1$, we have $A_{\delta} \subseteq B_{\delta}$ for all $\delta>0$.
Let $\mathscr{D}_n \coloneqq \{(X_1, Y_1), \ldots, (X_n, Y_n)\}$.
By Theorem~4 of \cite{jeong2023art}, there exists a constant $M>0$ such that $\mathbb{E}_0 \Pi (B_{L_{T,\pi} \epsilon_n} \mid \mathscr{D}_n) \rightarrow 0$ as $n \rightarrow \infty$, where $L_{T,\pi} = D_{T,\pi} M$.
Because $A_{L_{T,\pi} \epsilon_n} \subseteq B_{L_{T,\pi} \epsilon_n}$ for all $n$, we have $\mathbb{E}_0 \Pi (A_{L_{T,\pi} \epsilon_n} \mid \mathscr{D}_n) \rightarrow 0$ as $n \rightarrow \infty$.
\end{proof}

The proofs of Theorem \ref{thm:fixeddesign} and Corollary \ref{corr:fixeddesign} can be obtained by replacing the random-design bound $C_0^*$ with $\sqrt{\log n}$ and the distribution $\pi$ with the probability measure $P_{\sX}$.

Regarding the constant $D_{T,\pi}$ (and the corresponding constants $D_{V,\pi,|P|}$ and $D_{T,\pi}$) in the proof of Corollary~\ref{corr:randomdesign}, the sum $\sum_{P \in [p]} |a_{P,\pi}|$ seems to grow exponentially in $p$. Theorem~\ref{thm:sum} below states that this exponential dependence on $p$ holds really only for a total-effect Sobol\'{} index (although the sum for a Sobol\'{} index $V_P$ is $2^{|P|}-1$, in practice such indices are computed only for $|P| \leq 3$). 
However, $p$ is often much larger than the order of the highest-order interaction in the true function. 
If the input distribution $\pi$ is orthogonal (which is needed for a Sobol\'{} index to be interpretable), if the true function does not contain interactions of order larger than $q \leq p$, and if the BART posterior assigns zero probability to functions with interactions of order larger than $q$ (this third assumption is not unreasonable for even moderately large $q$, given that BART's prior discourages deep trees and a tree's regression function cannot have interactions of order larger than the tree's depth), then the sum's dependence on $p$ for the total-effect index reduces to an exponential dependence on $q$, which is often quite small.
(We can further reduce this dependence on $q$ if $\pi$ is orthogonal by omitting Sobol\'{} index terms for subsets containing inert variables in a similar fashion as described in the proof of Theorem~\ref{thm:sum}.)

\begin{theorem} \label{thm:sum}
    Upper bounds for 
    $D_S$, $D_{V,\pi,|P|}$, and $D_{T,\pi}$ in the proof of Corollary~\ref{corr:randomdesign}
    are, respectively, $\max\{1,8C_0^*\}$, $\max\{1,4C_0^*(2^{|P|}-1)\}$, and $\max\{1,4C_0^*\sum_{i=0}^{p-1} (p-1)!/\{(p-1-i)!i!\} (2^{i+1}-1)\}$.
    If $\pi$ is orthogonal and neither $f$ nor $f_0$ in Corollary~\ref{corr:randomdesign} contain interactions of order larger than $q<p$, then the preceding upper bounds for $D_{V,\pi,|P|}$ and $D_{T,\pi}$ can be reduced to, respectively, 
    $\max\{1,4C_0^*\sum_{i=1}^{\min\{q,|P|\}} P!/\{(P-i)!i!\}\}$ and $\max\{1,4C_0^*\sum_{i=0}^{q-1} (q-1)!/\{(q-1-i)!i!\} (2^{i+1}-1)\}$. 
\end{theorem}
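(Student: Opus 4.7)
The plan is to exploit Lemma~\ref{lemma:two} together with Theorem~\ref{thm:randomdesign}: for each sensitivity index I write it as a finite linear combination $\sum_{Q \subseteq [p]} a_{Q,\pi}\, c_{Q,\pi}$ of the functionals $c_{Q,\pi}$ from \eqref{eq:c}, then apply Lemma~\ref{lemma:two} to conclude that the corresponding Lipschitz-type constant is at most $\max\{1,\, 4 C_0^* \sum_Q |a_{Q,\pi}|\}$, matching the form of $D_S$, $D_{V,\pi,|P|}$, and $D_{T,\pi}$ appearing in the proof of Corollary~\ref{corr:randomdesign}. The theorem then reduces to three counting problems for $\sum_Q |a_{Q,\pi}|$, plus a separate argument establishing the reductions under orthogonality.

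For $V_{P,\pi}$, M\"obius inversion applied to the recursion $V_{P,\pi}(f) = c_{P,\pi}(f) - \sum_{Q \subset P} V_{Q,\pi}(f)$ (with $V_{\emptyset,\pi} = 0$) gives $V_{P,\pi}(f) = \sum_{Q \subseteq P} (-1)^{|P|-|Q|} c_{Q,\pi}(f)$; since $c_{\emptyset,\pi} \equiv 0$, only the $2^{|P|}-1$ nonempty subsets of $P$ contribute, each with a $\pm 1$ coefficient. Substituting this representation into $T_{j,\pi}(f) = \sum_{P \subseteq [p] \setminus \{j\}} V_{P \cup \{j\}, \pi}(f)$ and bounding without cross-cancellation between outer summands yields $\sum_Q |a_{Q,\pi}| \leq \sum_{i=0}^{p-1} \binom{p-1}{i} (2^{i+1} - 1)$. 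For $S_{j,\pi}$, formula \eqref{eq:shapley} already exhibits the Shapley effect as a linear combination in which $c_{Q,\pi}$ has coefficient $(p-|Q|)!(|Q|-1)!/p!$ when $j \in Q$ and $-(p-|Q|-1)!\,|Q|!/p!$ when $j \notin Q$ and $Q \ne \emptyset$; grouping coefficients by $k = |Q|$ and using $\binom{p-1}{k-1}(p-k)!(k-1)!/p! = 1/p$ for each group, both groups contribute total absolute mass $1$, so after discarding the zero $c_{\emptyset,\pi}$ term one obtains $\sum_Q |a_{Q,\pi}| \leq 2$. Multiplying each of the three counts by $4 C_0^*$ and taking the max with $1$ produces the three unreduced bounds.

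Under orthogonality together with the no-interaction-above-$q$ hypothesis, the extra identity is $c_{Q,\pi}(f) = \sum_{R \subseteq Q} V_{R,\pi}(f)$, which combined with $V_{R,\pi}(f) = V_{R,\pi}(f_0) = 0$ whenever $|R| > q$ allows me to rewrite every $c_{Q,\pi}$ with $|Q| > q$ in terms of lower-order $c$'s and substitute this back into the M\"obius expansions from the previous paragraph. Collapsing onto the surviving $|Q| \leq q$ index sets produces the reduced bound $\sum_{i=1}^{\min\{q,|P|\}} \binom{|P|}{i}$ for $V_{P,\pi}$; for $T_{j,\pi}$, an additional truncation of the outer sum to $|P| \leq q-1$ (since $V_{P \cup \{j\},\pi}$ vanishes whenever $|P \cup \{j\}| > q$), combined with the orthogonality-driven dropping of any $c_{Q,\pi}$ whose index set contains an inert variable, leads to $\sum_{i=0}^{q-1} \binom{q-1}{i} (2^{i+1} - 1)$. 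The hardest part of the argument will be this final tightening from $\binom{p-1}{i}$ to $\binom{q-1}{i}$: it requires simultaneously leveraging the interaction-order bound and the inert-variable support-restriction shared by $f$ and $f_0$, while tracking signs carefully across repeated substitutions so that the resulting absolute-coefficient sum actually attains the stated bound rather than exceeding it.
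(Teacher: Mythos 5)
Your proposal follows essentially the same route as the paper's proof: expand each index as a linear combination of the functionals $c_{Q,\pi}$, bound the sum of absolute coefficients (the same $\tfrac{1}{p}\binom{p-1}{|P|}^{-1}$ identity for the Shapley effect, the same $2^{|P|}-1$ nonempty-subset count with $\pm1$ coefficients for $V_{P,\pi}$, and the same substitution into the outer sum for $T_{j,\pi}$), and multiply by the $4C_0^*$ from Theorem~\ref{thm:randomdesign}. The one point where you are more candid than the paper --- that passing from $\binom{p-1}{i}$ to $\binom{q-1}{i}$ in the reduced total-effect bound needs more than the interaction-order hypothesis alone --- concerns a step the paper itself dispatches in a single sentence, so it is not a gap relative to the published argument.
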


\begin{proof}[Proof of Theorem~\ref{thm:sum}]
For the Shapley effect bound, we note that
\begin{align*}
    \frac{1}{p}\sum_{P \subseteq ([p] \setminus \{j\})} \frac{(p-1-|P|)!|P|!}{(p-1)!} = \frac{1}{p}\sum_{i=0}^{p-1} 1 = 1.
\end{align*}
This with \eqref{eq:shapley} and Theorem~\ref{thm:randomdesign} together imply
\begin{align*}
    \Big|S_{j, \pi}(f) - S_{j, \pi}(f_0)\Big| 
    &\leq \left|\frac{1}{p} \sum_{P \subseteq ([p] \setminus \{j\})} \frac{(p-1-|P|)!|P|!}{(p-1)!} \left[c_{P \cup \{j\},\pi}(f) - c_{P \cup \{j\},\pi}(f_0) \right]\right| \\
    &+ \left|\frac{1}{p} \sum_{P \subseteq ([p] \setminus \{j\})} \frac{(p-1-|P|)!|P|!}{(p-1)!} \left[c_{P,\pi}(f) - c_{P,\pi}(f_0) \right]\right| \\
    &\leq 4C_0^* \norm{f - f_0}_{2,\pi} + 4C_0^* \norm{f - f_0}_{2,\pi}.
\end{align*}

As defined in Section~2.2, a Sobol\'{} index $V_{P,\pi}(f)$ is a linear combination of costs \eqref{eq:c} over all nonempty subsets of $P$, where each coefficient in the linear combination is either $1$ or $-1$.
Since $P$ has $\sum_{i=1}^{|P|} |P|!/\{(|P|-i)!i!\} = 2^{|P|}-1$ many nonempty subsets, we can use Theorem~\ref{thm:randomdesign}  to get
\begin{align*}
    \Big|V_{P, \pi}(f) - V_{P, \pi}(f_0)\Big| 
    &\leq (2^{|P|}-1) 4C_0^* \norm{f - f_0}_{2,\pi}.
\end{align*}
If $\pi$ is orthogonal and there are no interactions of order larger than $q$, then the Sobol\'{} indices for the subsets of $P$ containing more than $q$ elements are zero, and hence we can omit those Sobol\'{} indices from $V_{P,\pi}(f) - V_{P,\pi}(f_0)$.
Since $P$ has $\sum_{i=1}^{\min\{q,|P|\}} |P|!/\{(|P|-i)!i!\}$ many nonempty subsets containing at most $q$ elements, the desired result follows.

As defined in Section~2.2, a total-effects Sobol\'{} index $T_{j,\pi}(f)$ is the sum of $V_{P,\pi}(f)$ over all subsets $P \subseteq [p]$ containing $j$. Using the above result, we get
\begin{align*}
    \Big|T_{j, \pi}(f) - T_{j, \pi}(f_0)\Big| 
    &= \left|\sum_{P \subseteq ([p] \setminus \{j\})} V_{P\cup\{j\},\pi}(f) - V_{P\cup\{j\},\pi}(f_0) \right| \\
    &\leq \sum_{P \subseteq ([p] \setminus \{j\})} (2^{|P|+1}-1) 4C_0^* \norm{f - f_0}_{2,\pi}\\
    &\leq \sum_{i=0}^{p-1} \frac{(p-1)!}{(p-1-i)!i!} (2^{i+1}-1) 4C_0^* \norm{f - f_0}_{2,\pi}.
\end{align*}
If $\pi$ is orthogonal and there are no interactions of order larger than $q$, then  the remaining result follows if we again omit from each sum the subsets of $P$ containing more than $q$ elements.
\end{proof}

\section{Functions used in simulation studies in Section~5}

Table \ref{tbl:VA1} contains the variances, Sobol\'{} indices, and Shapley values for each test function.
\begin{enumerate}
    \item The ``Friedman'' function \citep{Friedman91} is defined as \[f(\x) \coloneqq 10\sin(\pi x_1 x_2) + 20(x_3 - 0.5)^2 + 10x_4 + 5x_5.\]
    \item The ``Morris'' function inspired by \cite{Morris2006} is defined as 
        \begin{align*}
        f(\x) &\coloneqq \alpha \sum_{i=1}^{d} x_i + \beta \sum_{i=1}^{d-1} x_i \sum_{j=i+1}^{d}  x_j 
        \end{align*}
        where $\alpha = \sqrt{12} - 6\sqrt{0.1 (d-1)} \approx -0.331$ and $\beta = \frac{12}{\sqrt{10(d-1)}} \approx 1.897$ are chosen. 
    \item The ``Bratley'' function \citep{Bratley1992,Kucherenko2011} is defined as 
    \begin{align*}
        f(\x) &\coloneqq \sum_{i=1}^{d} (-1)^i \prod_{j=1}^i x_j 
        = -x_1 + x_1 x_2 - x_1 x_2 x_3 + x_1 x_2 x_3 x_4 - x_1 x_2 x_3 x_4 x_5.
    \end{align*}
    \item The ``$g$-function'' from \cite{Saltelli95} is defined as \[f(\x) \coloneqq \prod_{k=1}^{d} \frac{|4x_k-2| + c_k}{1+c_k},\] where we use $c_k = (k-1)/2$ for $k = 1, \ldots, d$ suggested by \cite{Crestaux09}. 
\end{enumerate}

\begin{table}
\small
\centering
\def\arraystretch{0.6}
\begin{tabular}{|r || *9{c|} *3{c|}} 
\hline 
& \multicolumn{3}{c|}{Friedman} & \multicolumn{3}{c|}{Morris} & \multicolumn{3}{c|}{Bratley} & \multicolumn{3}{c|}{$g-$function} \\ \hline  
  & \multicolumn{3}{c|}{Var: $23.8$} & \multicolumn{3}{c|}{Var: $5.25$} & \multicolumn{3}{c|}{Var: $0.057$} & \multicolumn{3}{c|}{Var: $3.076$} \\ \hline
$j$ & $V_j^*$ & $T_j^*$ & $S_j^*$ & $V_j^*$ & $T_j^*$ & $S_j^*$ & $V_j^*$ & $T_j^*$ & $S_j^*$ & $V_j^*$ & $T_j^*$ & $S_j^*$\\ 
\hline \hline
1  & 0.197 & 0.274 & 0.235 & 0.190 & 0.210 & 0.2 & 0.688 & 0.766 & 0.725 & 0.411 & 0.558 & 0.482 \\
2  & 0.197 & 0.274 & 0.235 & 0.190 & 0.210 & 0.2 & 0.142 & 0.220 & 0.179 & 0.183 & 0.288 & 0.233 \\
3  & 0.093 & 0.093 & 0.093 & 0.190 & 0.210 & 0.2 & 0.051 & 0.099 & 0.073 & 0.103 & 0.172 & 0.135 \\
4  & 0.350 & 0.350 & 0.350 & 0.190 & 0.210 & 0.2 & 0.006 & 0.018 & 0.011 & 0.066 & 0.113 & 0.088 \\
5  & 0.087 & 0.087 & 0.087 & 0.190 & 0.210 & 0.2 & 0.006 & 0.018 & 0.011 & 0.046 & 0.080 & 0.062 \\
\hline
\end{tabular}
\caption{Normalized main-effects $V_j^*=V_j^*(f)$, total-effects $T_j^*=T_j^*(f)$, and Shapley effects $S_j^*=S_j^*(f)$ for various functions $f$ and variable indices $j \in [5]$ under orthogonal inputs.}
\label{tbl:VA1}
\end{table}

\section{How do metamodels scale with input dimension?}
\label{sec:scale-with-input-dimension}

Here we explore how various metamodels scale with input dimension $p$.
Sparse variational GPs are known for being scalable in sample size $n$, but as explained in the abstract of \cite{burt2020convergence}, to make the KL-divergence between the approximate model and the exact posterior arbitrarily small for a Gaussian-noise regression model with $M << N$ inducing points, squared-exponential kernel, and $p$-dimensional Gaussian distributed covariates, an overall computational cost of $O(N(\log N )^{2p} (\log \log N )^2)$, which is exponential in the input dimension $p$, is required. 

Deep GPs \citep{damianou2013deep,sauer2023active} are meant to be scalable in sample size, but not necessarily in total input dimension $p$. 
Figure~\ref{fig:deepGP} shows that for a single-layer deep GP (which is essentially a typical GP) with anisotropic lengthscales, its training time increases strongly with $p$ because a different parameter is needed for each input dimension. As discussed in our theoretical results, our approach adapts to anisotropy and hence this figure shows only models that also allow for anisotropy. However, we also found in this same study that for a GP with isotropic lengthscales, its training time is roughly constant with $p$ because the same parameter is used for all input dimensions; hence such a GP could be computationally useful for isotropic regression functions with high-dimensional input variables.

\begin{figure}[h]
\centering
\includegraphics[width=\textwidth]{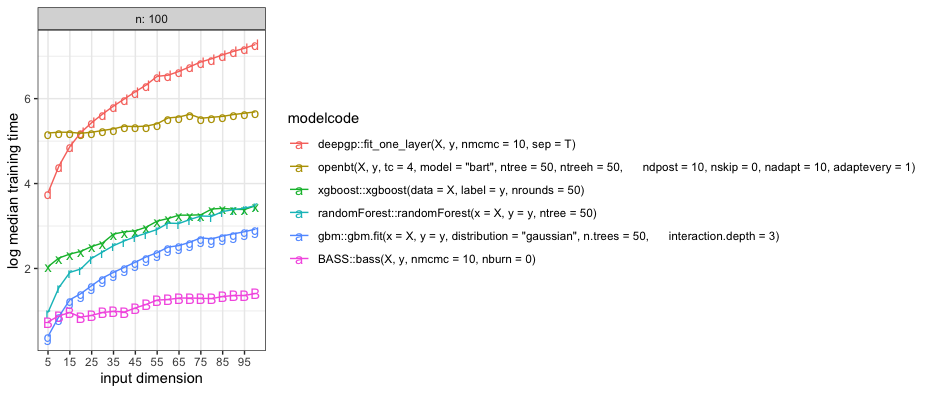}
\caption{Average training time (over 10 evaluations per input dimension) of metamodel implementations on an M1-chip 4-core laptop for $n=100$.}
\label{fig:deepGP}
\end{figure}

Figure~\ref{fig:n3000} shows the training time for 
the \texttt{OpenBT} implementation of BART \cite{openbt}, 
the \texttt{randomForest} package \citep{randomForest2002} implementation of random forests \cite{Breiman01}, 
the \texttt{xgboost} package \citep{xgboost2024} implementation of Extreme Gradient Boosting, 
the \texttt{gbm} package \citep{gbm2024} implementation of Generalized Boosted Regression Models, 
and the \texttt{BASS} package \citep{bass2020} implementation of Bayesian MARS \citep{francom2018sensitivity}.
We see that as $p$ increases, \texttt{randomForest}, \texttt{xgboost}, and \texttt{gbm} seem to increase at roughly the same rate, and that this rate is larger than the rate for either \texttt{openBT} or \texttt{BASS}.
The training time for \texttt{openBT} seems to have a larger start-up time, but based on Figure~\ref{fig:n3000} and Figure~\ref{fig:deepGP}, the training time for \texttt{openBT} seems to grow more slowly in $n$ than do the other ensemble methods.
Interestingly, \texttt{BASS} has a small training time for both sample sizes, and seems to grow slowly in $p$.

\begin{figure}[h]
\centering
\includegraphics[width=\textwidth]{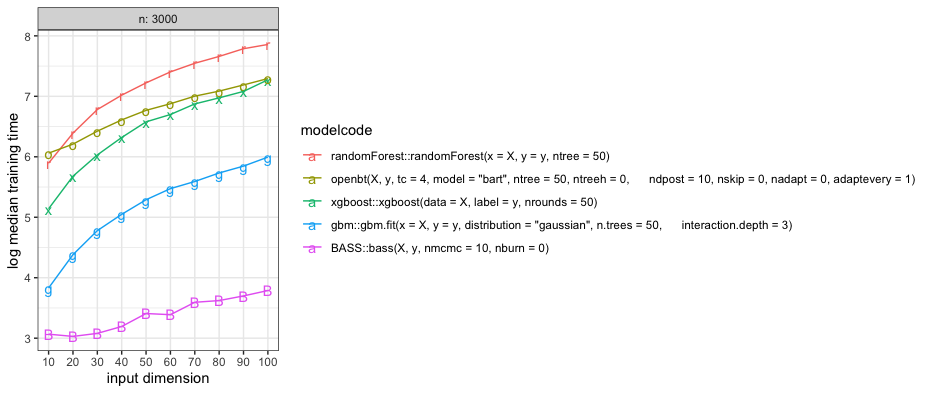}
\caption{Average training time (over 10 evaluations per input dimension) of metamodel implementations on an M1-chip 4-core laptop for $n=3000$.}
\label{fig:n3000}
\end{figure}

\end{document}